\documentclass[a4paper,twocolumn,11pt,accepted=2023-08-29]{quantumarticle}
\pdfoutput=1
\usepackage[utf8]{inputenc}
\usepackage[english]{babel}
\usepackage[T1]{fontenc}
\usepackage{amsmath}
\usepackage{hyperref}
\usepackage[numbers,sort&compress]{natbib}
\usepackage{tikz}
\usepackage{lipsum}
\usepackage{amsfonts,amssymb,caption,color,epsfig,graphics,graphicx,hyperref,latexsym,mathrsfs,revsymb,theorem,url,verbatim,enumerate,epstopdf,float,multirow,booktabs,appendix}

\newtheorem{definition}{Definition} 
\newtheorem{proposition}{Proposition} 
\newtheorem{lemma}{Lemma}

\newtheorem{theorem}{Theorem} 
\newtheorem{corollary}[definition]{Corollary}
\newtheorem{conjecture}{Conjecture}

\newtheorem{remark}[definition]{Remark}
\newtheorem{example}{Example} 
\newtheorem{question}[definition]{Question}

\def\bcj{\begin{conjecture}}
	\def\ecj{\end{conjecture}}
\def\bcr{\begin{corollary}}
	\def\ecr{\end{corollary}}
\def\bd{\begin{definition}}
	\def\ed{\end{definition}}
\def\bea{\begin{eqnarray}}
	\def\eea{\end{eqnarray}}
\def\bem{\begin{enumerate}}
	\def\eem{\end{enumerate}}
\def\bex{\begin{example}}
	\def\eex{\end{example}}
\def\bim{\begin{itemize}}
	\def\eim{\end{itemize}}
\def\bl{\begin{lemma}}
	\def\el{\end{lemma}}
\def\bma{\begin{bmatrix}}
	\def\ema{\end{bmatrix}}
\def\bpf{\begin{proof}}
	\def\epf{\end{proof}}
\def\bpp{\begin{proposition}}
	\def\epp{\end{proposition}}
\def\bqu{\begin{question}}
	\def\equ{\end{question}}
\def\br{\begin{remark}}
	\def\er{\end{remark}}
\def\bt{\begin{theorem}}
	\def\et{\end{theorem}}


\def\squareforqed{\hbox{\rlap{$\sqcap$}$\sqcup$}}
\def\qed{\ifmmode\squareforqed\else{\unskip\nobreak\hfil
		\penalty50\hskip1em\null\nobreak\hfil\squareforqed
		\parfillskip=0pt\finalhyphendemerits=0\endgraf}\fi}
\def\endenv{\ifmmode\;\else{\unskip\nobreak\hfil
		\penalty50\hskip1em\null\nobreak\hfil\;
		\parfillskip=0pt\finalhyphendemerits=0\endgraf}\fi}
\newenvironment{proof}{\noindent \textbf{{Proof.~} }}{\qed}
\def\Dbar{\leavevmode\lower.6ex\hbox to 0pt
	{\hskip-.23ex\accent"16\hss}D}
\makeatletter
\def\url@leostyle{%
	\@ifundefined{selectfont}{\def\UrlFont{\sf}}{\def\UrlFont{\small\ttfamily}}}
\makeatother
\urlstyle{leo}

\def\bcj{\begin{conjecture}}
	\def\ecj{\end{conjecture}}
\def\bcr{\begin{corollary}}
	\def\ecr{\end{corollary}}
\def\bd{\begin{definition}}
	\def\ed{\end{definition}}
\def\bea{\begin{eqnarray}}
	\def\eea{\end{eqnarray}}
\def\bem{\begin{enumerate}}
	\def\eem{\end{enumerate}}
\def\bex{\begin{example}}
	\def\eex{\end{example}}
\def\bim{\begin{itemize}}
	\def\eim{\end{itemize}}
\def\bl{\begin{lemma}}
	\def\el{\end{lemma}}
\def\bpf{\begin{proof}}
	\def\epf{\end{proof}}
\def\bpp{\begin{proposition}}
	\def\epp{\end{proposition}}
\def\bqu{\begin{question}}
	\def\equ{\end{question}}
\def\br{\begin{remark}}
	\def\er{\end{remark}}
\def\bt{\begin{theorem}}
	\def\et{\end{theorem}}

\def\btb{\begin{tabular}}
	\def\etb{\end{tabular}}

\newcommand{\nc}{\newcommand}


\nc{\bbA}{\mathbb{A}} \nc{\bbB}{\mathbb{B}} \nc{\bbC}{\mathbb{C}}
\nc{\bbD}{\mathbb{D}} \nc{\bbE}{\mathbb{E}} \nc{\bbF}{\mathbb{F}}
\nc{\bbG}{\mathbb{G}} \nc{\bbH}{\mathbb{H}} \nc{\bbI}{\mathbb{I}}
\nc{\bbJ}{\mathbb{J}} \nc{\bbK}{\mathbb{K}} \nc{\bbL}{\mathbb{L}}
\nc{\bbM}{\mathbb{M}} \nc{\bbN}{\mathbb{N}} \nc{\bbO}{\mathbb{O}}
\nc{\bbP}{\mathbb{P}} \nc{\bbQ}{\mathbb{Q}} \nc{\bbR}{\mathbb{R}}
\nc{\bbS}{\mathbb{S}} \nc{\bbT}{\mathbb{T}} \nc{\bbU}{\mathbb{U}}
\nc{\bbV}{\mathbb{V}} \nc{\bbW}{\mathbb{W}} \nc{\bbX}{\mathbb{X}}
\nc{\bbZ}{\mathbb{Z}}


\nc{\bA}{{\bf A}} \nc{\bB}{{\bf B}} \nc{\bC}{{\bf C}}
\nc{\bD}{{\bf D}} \nc{\bE}{{\bf E}} \nc{\bF}{{\bf F}}
\nc{\bG}{{\bf G}} \nc{\bH}{{\bf H}} \nc{\bI}{{\bf I}}
\nc{\bJ}{{\bf J}} \nc{\bK}{{\bf K}} \nc{\bL}{{\bf L}}
\nc{\bM}{{\bf M}} \nc{\bN}{{\bf N}} \nc{\bO}{{\bf O}}
\nc{\bP}{{\bf P}} \nc{\bQ}{{\bf Q}} \nc{\bR}{{\bf R}}
\nc{\bS}{{\bf S}} \nc{\bT}{{\bf T}} \nc{\bU}{{\bf U}}
\nc{\bV}{{\bf V}} \nc{\bW}{{\bf W}} \nc{\bX}{{\bf X}}
\nc{\bZ}{{\bf Z}} \nc{\bm}{{\bf m}} \nc{\bv}{{\bf v}}
\nc{\ba}{{\bf a}} \nc{\be}{{\bf e}} \nc{\bu}{{\bf u}}
\nc{\brr}{{\bf r}}


\nc{\cA}{{\cal A}} \nc{\cB}{{\cal B}} \nc{\cC}{{\cal C}}
\nc{\cD}{{\cal D}} \nc{\cE}{{\cal E}} \nc{\cF}{{\cal F}}
\nc{\cG}{{\cal G}} \nc{\cH}{{\cal H}} \nc{\cI}{{\cal I}}
\nc{\cJ}{{\cal J}} \nc{\cK}{{\cal K}} \nc{\cL}{{\cal L}}
\nc{\cM}{{\cal M}} \nc{\cN}{{\cal N}} \nc{\cO}{{\cal O}}
\nc{\cP}{{\cal P}} \nc{\cQ}{{\cal Q}} \nc{\cR}{{\cal R}}
\nc{\cS}{{\cal S}} \nc{\cT}{{\cal T}} \nc{\cU}{{\cal U}}
\nc{\cV}{{\cal V}} \nc{\cW}{{\cal W}} \nc{\cX}{{\cal X}}
\nc{\cZ}{{\cal Z}}


\nc{\hA}{{\hat{A}}} \nc{\hB}{{\hat{B}}} \nc{\hC}{{\hat{C}}}
\nc{\hD}{{\hat{D}}} \nc{\hE}{{\hat{E}}} \nc{\hF}{{\hat{F}}}
\nc{\hG}{{\hat{G}}} \nc{\hH}{{\hat{H}}} \nc{\hI}{{\hat{I}}}
\nc{\hJ}{{\hat{J}}} \nc{\hK}{{\hat{K}}} \nc{\hL}{{\hat{L}}}
\nc{\hM}{{\hat{M}}} \nc{\hN}{{\hat{N}}} \nc{\hO}{{\hat{O}}}
\nc{\hP}{{\hat{P}}} \nc{\hR}{{\hat{R}}} \nc{\hS}{{\hat{S}}}
\nc{\hT}{{\hat{T}}} \nc{\hU}{{\hat{U}}} \nc{\hV}{{\hat{V}}}
\nc{\hW}{{\hat{W}}} \nc{\hX}{{\hat{X}}} \nc{\hZ}{{\hat{Z}}}

\nc{\hn}{{\hat{n}}}





























\def\max{\mathop{\rm max}}





\def\tr{\mathop{\rm Tr}}



\newcommand{\ket}[1]{|#1\rangle}


















\def\Dbar{\leavevmode\lower.6ex\hbox to 0pt
	{\hskip-.23ex\accent"16\hss}D}

	\newcommand{\add}[1]{{\color{black}#1}}

\begin{document}

\title{Strongly nonlocal unextendible product bases do exist}

	\title{Bounds on the smallest sets of  quantum states with  special quantum nonlocality  }

	\author{Mao-Sheng Li}
	\email{li.maosheng.math@gmail.com}
	
	\affiliation{ School of Mathematics,
		South China University of Technology, Guangzhou
		510641,  China}
	\author{Yan-Ling Wang}
	\email{wangylmath@yahoo.com}
	\affiliation{ School of Computer Science and Technology, Dongguan University of Technology, Dongguan, 523808, China}

\maketitle
\small
\begin{abstract}
 	 An orthogonal set of states in multipartite systems is called to be  strong quantum nonlocality  if it is locally irreducible under every bipartition of the subsystems [\href{https://journals.aps.org/prl/abstract/10.1103/PhysRevLett.122.040403}{Phys. Rev. Lett. \textbf{122}, 040403 (2019)}].
		  In this work, we study a subclass of locally irreducible sets: the only possible orthogonality preserving measurement on  each subsystems are   trivial  measurements.   We call the set with this property is    locally stable.   We find that in the case of two qubits systems locally stable sets \add{are} coincide with locally indistinguishable  sets.  Then we present a characterization of locally stable sets via the dimensions of some  states  depended \add{spaces.}  \add{Although the concept of locally stable set was proposed from the interest in mathematical properties, it also has its physical significance. One finds that   locally stable sets of orthogonal product states could not be perfectly distinguishable even with the use of asymptotic  local operations and classical
		  	communication (LOCC),
		  wherein an error is allowed but must vanish in the limit of an infinite number of rounds.} Moreover, we construct two   orthogonal sets in general multipartite quantum systems which are locally stable under every bipartition of the subsystems.   As a consequence, we obtain a lower bound and an upper bound  on the size of the smallest set which is locally stable for each bipartition of the subsystems.    Our results  provide  a complete answer to an open question (that is,  can we show
strong quantum nonlocality in       $\mathbb{C}^{d_1} \otimes \mathbb{C}^{d_1}\otimes \cdots \otimes \mathbb{C}^{d_N} $ for
any $d_i \geq 2$ and $1\leq i\leq   N$?) raised  in  a recent paper [\href{https://journals.aps.org/pra/abstract/10.1103/PhysRevA.105.022209}{Phys. Rev. A \textbf{105}, 022209 (2022)}].  Compared with all previous relevant proofs, our proof here is quite concise.
 
\end{abstract}

		\section{Introduction}  Quantum state  discrimination problem is a fundamental problem in  quantum information theory \cite{nils}.  A quantum system is prepared
		in a state which is randomly chosen from a known set.
		The task  is to identify the state of the system. If the known set is orthogonal,  then taking a states dependent projective  measurement can finish this task perfectly. However, if the known set is non-orthogonal, it is impossible to distinguish the states perfectly \cite{nils}.   Most of time,  our quantum states are   distributed in composite systems. So only local operations and classical
		communication (LOCC) are allowed in the distinguishing protocol.  Under this setting,  if the task can be  accomplished perfectly, we say that the set is  \emph{locally distinguishable},
		otherwise,  \emph{locally indistinguishable}.   Bennett et al. \cite{Ben99} presented the first  example
		of  orthogonal product states in $\mathbb{C}^3\otimes  \mathbb{C}^3$ that are locally indistinguishable and they named such a phenomenon as quantum nonlocality without entanglement.    The nonlocality here is in the sense that    the information of the given set that can be inferred by using global measurement is strictly larger than that obtained via    LOCC.     The results of  local  indistinguishability of  quantum states have  been practically applied in quantum cryptography primitives such as   data hiding \cite{Terhal01,DiVincenzo02} and secret sharing \cite{Markham08,Rahaman15,WangJ17}.

		Since Bennett et al.'s result \cite{Ben99}, the quantum nonlocality based on local discrimination    has been studied extensively (see Refs. \cite{Gho01,Wal00,Wal02,Fan04,Nathanson05,Cohen07,Bandyopadhyay11,Li15,Fan07,Yu12,Cos13,Yu115,Wang19,Xiong19,Li20,Banik21,Ran04,Hor03,Ben99b,DiVincenzo03,Zhang14,Zhang15,Zhang16,Xu16b,Xu16m,Zhang16b,Wang15,Feng09,		Yang13,Zhang17,Halder18,Li18,Halder1909,Xu20b,Rout1909}  for an
		incomplete list). \add{ There are two hot topics on the local discrimination of quantum states. One is aim at reducing the cardinality of the nonlocal sets. The other is to study some stronger form of nonlocal sets.}
		
		\add{ Adding orthogonal states to a  nonlocal set   forms a nonlocal set  again.  Therefore,  the smaller of the cardinality  means
			the better of the constructing nonlocal set. So it is interesting to ask  how small  a  nonlocal set could  be in some given systems? However,  the  answer is trivially being $3$ as
			we can always  embedding any  set with 3 elements from four Bell states (in two qubit systems)  into systems with large local dimensions  and more parties. There are many works   devoted to exploring small nonlocal sets of orthogonal states, all of which seem to have consciously avoided this trivial solution. One finds that all the proofs of  the local indistinguishability are based on the method derived by Walgate and   Hardy \cite{Wal02}. They observed that   in any locally distinguishable
			protocol one of the parties must go first, and whoever goes
			first must be able to perform some nontrivial orthogonality-preserving measurement (here a measurement $\{E_x\}_{x\in\mathcal{X}}$ is called nontrivial if not all the
			positive semidefinite operators  are proportional to the identity operator). Therefore, given an orthogonal set of multipartite systems, if one can show that the only possible  local orthogonal preserving measurements  for each partite are just the   trivial   measurements,  then one can conclude that the set is locally indistinguishable. Although this method has been applied for proving the local  indistinguishability of sets of quantum states  in  many research (see Refs. \cite{Zhang14,Zhang15,Wang15,Zhang16,Xu16b,Xu16m,Zhang16b,Zhang17,Halder18,Halder1909,Rout1909,Xu20b,Xu20a,Halder20c,Li21j}), its strength for proving   nonlocality   is still worth exploring. For example, we do not even know the minimum number of elements of the nonlocal set that can be derived by this method. From the perspective of mathematical research, it is both necessary and interesting to study the properties of nonlocal sets that can be described by this method.  This leads us to propose the concept of \emph{locally stable set}, an  orthogonal set  of multipartite quantum states  such that  the only possible
			orthogonality preserving measurement on each  subsystem are trivial  measurements. Under this defintion, it is interesting to find how small a locally stable set could be for a given multipartite systems. In this paper, we will provide some bounds on the cardinality of locally stable sets. Although arising from mathematical interest, locally stable sets are also found to have their physical meanings.  As a consequence of  unavoidable imperfections  in
			the real world, it is more appropriate to ask whether or not a task
			can be accomplished  with the amount
			of error arbitrary small.   If not,  the amount
			of error    is impossible to avoid. Recently, Cohen   \cite{Cohen22}   studied whether a set could be   perfectly  distinguishable under asymptotic LOCC,
			wherein an error is allowed but must vanish in the limit of an infinite number of rounds. Using a result from their work, we will find that locally stable sets of  orthogonal product states are  locally indistinguishable even in the sense of   asymptotic LOCC. }

		Recently, Halder \emph{et al.} \cite{Halder19} introduced  a  stronger form of local indistinguishability \add{which is} based on the concept of local irreducibility.   A set of multipartite orthogonal quantum states is said to be \emph{locally irreducible} if it is not possible to  eliminate one or more states from that set using orthogonality  preserving  local measurement.   A set of multipartite orthogonal quantum states is said to be \emph{strongly nonlocal} if it is locally irreducible for each bipartition of the subsystems. Although lots of study are focus on this topic (see  Refs.   \cite{Yuan20,Zhang1906,Shi20S,Wang21,Shi21,Shi21b,Shi22,Shi22N}), it remains several  open questions one of which is: does the   strong quantum nonlocality  exist  in       $\mathbb{C}^{d_1} \otimes \mathbb{C}^{d_2}\otimes \cdots \otimes \mathbb{C}^{d_N} $ for
any $d_i \geq 2$ and $1\leq i\leq   N$?  Similarly with the concept of strongly nonlocal, we should also study those sets of multipartite  orthogonal quantum states that are locally stable for each bipartition of the subsystems.        In Theorem \ref{thm:SN_MP},  we will show that  there do exist    sets  that are locally stable for each bipartition of the subsystems in       $\mathbb{C}^{d_1} \otimes \mathbb{C}^{d_1}\otimes \cdots \otimes \mathbb{C}^{d_N} $ for
any $d_i \geq 2$ and $1\leq i\leq   N$.  As locally  stable sets are always locally irreducible,   sets that are locally stable for each bipartition of the subsystems  are always strongly nonlocal. Therefore, our results  provide  a complete answer  to the aforementioned open question affirmatively.

		The rest of this article is organized as follows. In Sec. \ref{sec:concept}, we review the concept of locally indistinguishable set and introduce a special form of quantum nonlocality called locally stable set.   In Sec. \ref{sec:characterize}, we first give a complete descriptions of locally stable sets in two qubits systems. Then we give a characterization of locally stable sets by   some algebraic quantity known as the rank of some matrix.  In Sec. \ref{sec:twoconstruction}, we give two constructions of   sets that are locally stable for each bipartition of the subsytems in general multipartite quantum systems.     In Sec. \ref{sec:strongnonlocality},  we study the strong nonlocality of $W$ type states in multi-qubit systems.      Finally, we draw a conclusion and present some interesting problems in     Sec. \ref{sec:Conclusion}.

		\vskip 12pt	
		
		\section{Preliminaries}\label{sec:concept}

		For any positive integer $d\geq 2$, we denote $\bbZ_{d}$ as the set $\{0,1,\cdots,d-1\}$.	Let $\cH$ be an $d$ dimensional Hilbert space. We always assume that $\{|0\rangle, |1\rangle, \cdots,|d-1\rangle\}$ is the computational basis of $\cH$. A positive operator-valued measure (POVM) on $\mathcal{H}$ is a set of  positive semidefinite operators $\{E_x\}_{x\in\cX}$ such that $\sum_{x\in\cX} E_x= \mathbb{I}_{\mathcal{H}}$ where $\mathbb{I}_{\mathcal{H}}$  is the identity operator on   ${\mathcal{H}}$. \add{A measurement is  called trivial if   all its
		POVM elements  $\{E_x\}_{x\in\mathcal{X}}$   are proportional to the identity operator, i.e., $E_x\propto \mathbb{I}_\mathcal{H}$. }   For a finite set $\mathcal{S}$, we denote $|\mathcal{S}|$  as  the number of elements in that set. Throughout this paper, we do not normalize states
		for simplicity.
		
				 Now we give a brief review of some concept related to local discrimination of quantum states.
		
		\begin{definition}[Locally indistinguishable]\label{def: locally irreducible}
			A set of  orthogonal  pure states in multipartite quantum systems is said \emph{locally indistinguishable}, if it is not possible to distinguish the states by using LOCC.
			
		\end{definition}
		
		\begin{definition}[Locally irreducible]\label{def: locally irreducible}
			An orthogonal set of pure states in multipartite quantum systems is \emph{locally irreducible} if it is not
			possible to eliminate one or more states from the set by orthogonality-preserving local measurements.
		\end{definition}
	
	\add{	It has been pointed out that locally irreducible sets are always locally indistinguishable but the the opposite case is not true.}
  Note that every  LOCC protocol that distinguishes a set of orthogonal
	states is a sequence of orthogonality preserving local measurements (OPLM).   There is a sufficient condition to prove that an orthogonal set is   \add{locally indistinguishable or even  locally irreducible}: in each subsystem can only  perform  a trivial orthogonality preserving local measurement.    \add{ Given an orthogonal set $\mathcal{S}$ of pure states $\{|\Psi_i\rangle\}_{i=1}^n$ in multipartite systems $\otimes_{n=1}^N\mathcal{H}_{A_n}$, now we review a general method to show the local indistinguishability or local irreducibility of $\mathcal{S}$:  if the set could be locally distinguished or locally reducible, then at least one of the parties could start with a nontrivial orthogonality preserving measurement.     For example, if $A_n$ takes the first  nontrivial orthogonality preserving   measurement $\{E_x=M_x^\dagger M_x\}_{x}$ in the protocol, then at least one of $E_x$ is not proportional to $\mathbb{I}_{A_n}$ and the set of states  $\{M_x \otimes \mathbb{I}_{\hat{A}_n} |\Psi_i\rangle\}_{i=1}^n$ remain orthogonal for all $x$  where $\hat{A}_n:=\{A_1,A_2,\cdots,A_N\}\setminus\{A_n\},$ i.e.,
		\begin{equation}\label{eq:orthogonalRelation}
			\langle \Psi_j | E_x \otimes \mathbb{I}_{\hat{A}_n} |\Psi_i\rangle=0,  \  \text{for } 1\leq  i\neq j \leq n. 
		\end{equation}		   
		Therefore, if one can show that  $E_x\propto \mathbb{I}_{A_n}$ from Eqs.~\eqref{eq:orthogonalRelation} for all $n$, then one can conclude that the set is locally indistinguishable and locally irreducible. } This motivates us to introduce the following concept.
		
		\begin{definition}[Locally stable]\label{def: locally stable}
			An orthogonal set of pure states in multipartite quantum systems is said to be locally stable if the only possible
orthogonality preserving measurement on the subsystems are trivial  measurements. 
		\end{definition}
		
		\begin{figure}[h]
			\centering
			\includegraphics[scale=0.6]{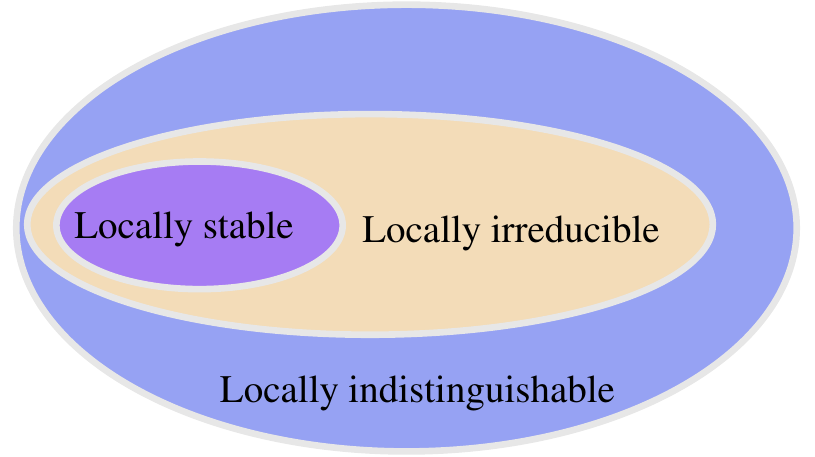}
			\caption{This is a schematic figure on the relations of the three concepts: locally indistinguishable, locally irreducible,  locally stable. }\label{fig:relation}
		\end{figure}
	\add{By the definition of locally stable set and the above proving strategy for a set to be locally irreducible, one finds that  locally stable  sets are always locally irreducible. Moreover, there exist locally irreducible sets that are not locally stable. In fact, the four Bell states 
		$$ \cS_B:=\{|\psi_{\pm}\rangle:=|00\rangle\pm|11\rangle, |\phi_\pm\rangle:=|01\rangle\pm|10\rangle\}$$ are    locally irreducible no matter looking them as  two qubit states or as two qutrit states.  However, they are not locally stable in   two qutrit systems $\mathbb{C}^3\otimes \mathbb{C}^3$ as $\{|0\rangle_A\langle 0|+|1\rangle_A \langle 1|, |2\rangle_A\langle 2|\}$ is a nontrivial orthogonality-preserving measurement for Alice.}   Therefore, locally stable sets present the strongest form of quantum nonlocality among the three classes: locally indistinguishable sets, locally irreducible sets and locally stable sets (See Figure \ref{fig:relation}). 
		
		There is some other form of stronger nonlocality considering the partition of the subsystems. In fact, a set of  orthogonal  pure states in multipartite quantum system is said to be  \emph{genuinely nonlocal} \cite{Rout1909,Li21j} if it is locally indistinguishable under each bipartition of the subsystems.    For the locally irreducible settings, Halder et al. \cite{Halder19} introduced the  \emph{strongly nonlocal set}  as the set   of  orthogonal  pure states in multipartite quantum system such that it is locally irreducible under each bipartition of the subsystems.   Therefore, in the setting of  locally stable, it is natural to study those sets   of  orthogonal  pure states in multipartite quantum systems such that they are locally stable  under each bipartition of the subsystems. In fact, sets with such property are called  \emph{strongest nonlocal sets} in Ref. \cite{Shi21}.  It is easy to deduce that strongest nonlocal sets are always strongly nonlocal and genuinely nonlocal. 
			
			In this paper, we mainly consider whether a given orthogonal set of  multipartite states is locally stable or  is of strongest nonlocality.   Let $\mathcal{H}=\otimes_{i=1}^N \mathcal{H}_{A_i}$ whose local dimensions are $\text{dim}_\mathbb{C} \mathcal{H}_{A_i}=d_i.$ We denote $s(d_1,d_2,\cdots,d_N)$ the smallest number of elements of those orthogonal sets in $\mathcal{H}$ that are locally stable. And   we denote $S(d_1,d_2,\cdots,d_N)$ the smallest number of elements  of those orthogonal sets in $\mathcal{H}$ that are of strongest nonlocality. In this work, we will give some bounds on the two quantities.

			\vskip 5pt
			\vskip 12pt	
			
			\section{Characterization of locally stable set and lower bounds on $s(d_1,d_2,\cdots,d_N)$ and $S(d_1,d_2,\cdots,d_N)$}\label{sec:characterize}
			First, we give a complete description of the locally stable  sets in two qubits systems.  The proof is very similar with that in  Ref. \cite{Wal02} where they considered locally indistinguishable sets.
			
			\begin{theorem}\label{thm:twoqubits_Case}
				Let $\cS=\{|\Psi_i\rangle\}_{i=1}^l$ be an orthogonal   set of  pure states in  $\mathbb{C}^2\otimes\mathbb{C}^2$.  Then
				$\cS$ is locally stable  if and only if   it is  locally indistinguishable. In fact, $\cS$ is locally stable if and only if  $|\cS|\geq3$ and contains at least two entangled states.
			\end{theorem}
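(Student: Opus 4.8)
The plan is to adapt the Walgate--Hardy method \cite{Wal02} to the condition ``every orthogonality-preserving local measurement on $\cS$ is trivial.'' The key tool is an elementary single-qubit lemma: a party, say Alice, can perform a \emph{nontrivial} orthogonality-preserving local measurement on $\cS$ if and only if there is an orthonormal basis $\{\ket{a},\ket{a^\perp}\}$ of her qubit for which, writing $\ket{\Psi_i}=\ket{a}\ket{\eta_i}+\ket{a^\perp}\ket{\xi_i}$, the family $\{\ket{\eta_i}\}_{i=1}^{l}$ is pairwise orthogonal and so is $\{\ket{\xi_i}\}_{i=1}^{l}$; I will call such a basis a \emph{splitting basis} (and symmetrically for Bob). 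To prove it: given such a measurement $\{E_x\}$, pick a POVM element $E_{x_0}$ not proportional to $\mathbb{I}$ and diagonalise it as $E_{x_0}=\lambda_0\proj{a}+\lambda_1\proj{a^\perp}$ with $\lambda_0\neq\lambda_1$; for $i\neq j$, orthogonality preservation gives $\lambda_0\braket{\eta_i}{\eta_j}+\lambda_1\braket{\xi_i}{\xi_j}=0$, which combined with $\braket{\Psi_i}{\Psi_j}=\braket{\eta_i}{\eta_j}+\braket{\xi_i}{\xi_j}=0$ and $\lambda_0\neq\lambda_1$ forces $\braket{\eta_i}{\eta_j}=\braket{\xi_i}{\xi_j}=0$. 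Conversely, if a splitting basis exists, the von Neumann measurement $\{\proj{a},\proj{a^\perp}\}$ is nontrivial and orthogonality preserving.

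Granting the lemma, the first equivalence is quick. ``Locally stable $\Rightarrow$ locally indistinguishable'' is the general implication recalled in Fig.~\ref{fig:relation}. Conversely, if $\cS$ is not locally stable, some party --- say Alice --- has a splitting basis; letting her measure in it preserves orthogonality, and after outcome $\ket{a}$ the surviving states are the nonzero members of $\{\ket{\eta_i}\}$, which are pairwise orthogonal vectors of $\mathbb{C}^2$ and hence perfectly identified by a von Neumann measurement of Bob (and symmetrically for outcome $\ket{a^\perp}$ with the $\ket{\xi_i}$). So $\cS$ is locally distinguishable, and the three notions coincide in $\mathbb{C}^2\otimes\mathbb{C}^2$.

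For the ``in fact'' part, the implication ``$|\cS|\ge 3$ and at least two entangled states $\Rightarrow$ locally stable'' is a dimension count. Suppose $\cS$ is not locally stable and fix a splitting basis, say for Alice. Since $\{\ket{\eta_i}\}$ and $\{\ket{\xi_i}\}$ are orthogonal families in a $2$-dimensional space, the sets $P=\{i:\ket{\eta_i}\neq 0\}$ and $Q=\{i:\ket{\xi_i}\neq 0\}$ have $|P|,|Q|\le 2$, while $P\cup Q=\{1,\dots,l\}$ as each $\ket{\Psi_i}\neq 0$. A state $\ket{\Psi_i}$ is entangled only if $\ket{\eta_i}$ and $\ket{\xi_i}$ are linearly independent, hence both nonzero, i.e.\ only if $i\in P\cap Q$; two entangled states therefore give $|P\cap Q|\ge 2$, forcing $P=Q=P\cap Q$ of size $2$ and $l=|P\cup Q|=2$, contradicting $l\ge 3$.

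Finally I would prove the contrapositive of the remaining direction: if $|\cS|\le 2$ or $\cS$ has at most one entangled state, then $\cS$ is not locally stable. If $|\cS|\le 2$, the set is LOCC-distinguishable \cite{Wal00}, and a perfect protocol, whose measurements are all orthogonality preserving, cannot consist only of trivial ones; its first nontrivial measurement acts on a rescaling of $\cS$, so $\cS$ is not locally stable. Otherwise $3\le l\le 4$, since $\mathbb{C}^2\otimes\mathbb{C}^2$ has dimension $4$. If all states are product, I would color each pair $\{i,j\}$ by which party's components are orthogonal; since $\mathbb{C}^2$ has no three pairwise-orthogonal vectors, neither color class contains a triangle, and a short check then shows that the components of one party take exactly two, mutually orthogonal, values, with states sharing a value having mutually orthogonal components on the other party --- so the resulting orthonormal basis of that party is a splitting basis. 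If exactly one state $\ket{\Psi_1}$ is entangled: for $l=3$ the two product states have orthogonal components on some party, giving an orthonormal basis there that one verifies, using $\ket{\Psi_1}\perp\ket{\Psi_2},\ket{\Psi_3}$, to be a splitting basis; for $l=4$ the three product states are aligned by one party as above, and then $\ket{\Psi_1}$ orthogonal to all three forces one of its components to be orthogonal to a full orthonormal basis of $\mathbb{C}^2$, hence to vanish --- impossible for an entangled state, so this subcase does not occur. In every case $\cS$ has a splitting basis and so admits a nontrivial orthogonality-preserving measurement. The main obstacle is this last case analysis: the splitting conditions $\braket{\eta_i}{\eta_j}=0$ do not follow from global orthogonality alone, so the candidate basis must be chosen with care and checked, exactly the type of bookkeeping carried out in \cite{Wal02}.
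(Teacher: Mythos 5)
Your proof is correct. The central equivalence --- locally stable iff locally indistinguishable --- is argued exactly as in the paper: both arguments isolate a POVM element not proportional to the identity, diagonalize it (the paper via the singular value decomposition of $M_x$, you via the spectral decomposition of $E_{x_0}$, which amounts to the same thing), extract the pair of relations $\braket{\eta_i}{\eta_j}+\braket{\xi_i}{\xi_j}=0$ and $\lambda_0\braket{\eta_i}{\eta_j}+\lambda_1\braket{\xi_i}{\xi_j}=0$ to force componentwise orthogonality, and then let Alice measure projectively in the resulting basis and Bob finish. Where you genuinely diverge is the ``in fact'' clause: the paper disposes of it in one sentence by citing the Walgate--Hardy characterization of locally indistinguishable sets in $\mathbb{C}^2\otimes\mathbb{C}^2$, whereas you reprove that characterization inside your splitting-basis framework --- the forward direction by the counting argument on $P=\{i:\ket{\eta_i}\neq 0\}$ and $Q=\{i:\ket{\xi_i}\neq 0\}$ (two entangled states force $|P\cap Q|\geq 2$, hence $l=|P\cup Q|=2$, contradicting $l\geq 3$), and the converse by exhibiting a splitting basis in each of the cases $l\leq 2$, all states product, and exactly one entangled state (where your subcase $l=4$ correctly turns out to be vacuous, since $\xi$ would have to be orthogonal to a full basis of $\mathbb{C}^2$). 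I checked the two-coloring/triangle argument and the remaining bookkeeping; they go through. The payoff of your route is a fully self-contained proof that does not outsource the hard direction to \cite{Wal02}, at the cost of the case analysis you flag; the paper's payoff is brevity. The only step I would tighten is your treatment of $l=2$: rather than arguing through an abstract perfect LOCC protocol, note that Walgate--Hardy's normal form for two orthogonal states is literally a splitting basis, which keeps everything inside your lemma.
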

			\begin{proof}
				If $\cS$ is locally stable, it is obviously local indistinguishable.
				If $\cS$ is not locally stable, then one of the parties may perform some nontrivial orthogonality preserving local measurement. Without loss of generality, we assume the first party can perform such a measurement. Hence there exists a semidefinite positive operator $E_x=M_x^\dagger M_x$ which is not proportional to $\mathbb{I}_2$ such that the elements in $\mathcal{S}':=\{M_x\otimes \mathbb{I}_2 |\Psi_i \rangle|i=1,\cdots,l\}$ are mutually orthogonal.  By the singular value decomposition, there are two   orthonormal sets  $\{|v_1\rangle, |v_2\rangle\}$  ($\langle v_i| v_j\rangle=\delta_{i,j} $)  and   $\{|w_1\rangle, |w_2\rangle\}$ ($\langle w_i| w_j\rangle =\delta_{i,j} $)  such that $$ M_x=\lambda_1|v_1\rangle\langle w_1|+\lambda_2|v_2\rangle\langle w_2|$$
				where $\lambda_1,\lambda_2\geq 0$ are  real numbers. As $E_x$   is not proportional to $\mathbb{I}_2$, we have $\lambda_1\neq \lambda_2.$ Each   $|\Psi_i\rangle$  ($1\leq i\leq l$) can be expressed as the following form
				$$|\Psi_i\rangle=|w_1\rangle|\psi_{i,1}\rangle+|w_2\rangle|\psi_{i,2}\rangle$$
				where $|\psi_{i,j}\rangle$ may be unnormalized and even be  a zero vector. As both $\mathcal{S}$ and $\mathcal{S}'$ are orthogonal sets, we have \begin{equation}
					\begin{array}{l}
						\langle \psi_{i,1}|\psi_{j,1}\rangle+  \langle \psi_{i,2}|\psi_{j,2}\rangle=0,\\[2mm]
						\lambda_1\langle \psi_{i,1}|\psi_{j,1}\rangle+ \lambda_2 \langle \psi_{i,2}|\psi_{j,2}\rangle=0
					\end{array}
				\end{equation}
				for $1\leq i\neq j\leq l.$ Therefore, we have $\langle \psi_{i,1}|\psi_{j,1}\rangle=\langle \psi_{i,2}|\psi_{j,2}\rangle=0$ as $\lambda_1\neq \lambda_2.$ With this at hand, Alice and Bob can provide a local protocol to distinguish the set $\mathcal{S}.$ In fact, Alice can perform the measurement $\pi_A:=\{|w_1\rangle\langle w_1|,|w_2\rangle\langle w_2|\}$ to the set $\mathcal{S}.$ For each outcome of this measurement, the possible states of Bob's party are orthogonal and hence can be distinguished by himself after   receiving the outcome of Alice.  
				
				The remaining result can be deduced from the known result (see Ref. \cite{Wal02})  that $\cS\subseteq \mathbb{C}^2\otimes\mathbb{C}^2$ is locally indistinguishable if and only if  $|\cS|\geq3$ and  $\mathcal{S}$ contains at least two entangled states. This completes the proof.
			\end{proof}

			\vskip 5pt

			Before studying the locally stable set for general bipartite systems, we  introduce some notation that may use throughout this section.

			Let $\mathcal{H}_1$ and $\mathcal{H}_2$ be two Hilbert spaces of dimensional $d_1$ and $d_2$ respectively. Denote
			$L(\mathcal{H}_2,\mathcal{H}_1)$ be all the linear operations mapping $\mathcal{H}_2$ to $\mathcal{H}_1$. There is a  linear one-one correspondence between the two linear spaces  $L(\mathcal{H}_2,\mathcal{H}_1)$ and $\mathcal{H}_1\otimes\mathcal{H}_2$. Exactly, this correspondence is given by the linear
			mapping  $\mathrm{vec}:L(\mathcal{H}_2,\mathcal{H}_1)  \rightarrow    {\mathcal{H}_1\otimes\mathcal{H}_2}$ defined by the action on the basis  $\mathrm{vec}(|i\rangle\langle j|):=|i\rangle |j\rangle.$ More generally, if $|a\rangle=\sum_{i\in\mathbb{Z}_{d_1}}a_i|i\rangle \in \mathcal{H}_1$ and  $|b\rangle=\sum_{j\in\mathbb{Z}_{d_2}}b_j|j\rangle\in\mathcal{H}_2$, then one can check that 
			\begin{equation}\label{eq:vecmap}
				\mathrm{vec}(|a\rangle\langle b|)=|a\rangle |\overline {b}\rangle
			\end{equation}
			where $|\overline{b}\rangle:=\sum_{j\in\mathbb{Z}_{d_2}}\overline{b_j}|j\rangle $ and $\overline{b_j}$  is the complex conjugate of ${b_j}.$
			This mapping is also an isometry, in the sense that
			\begin{equation}\label{eq: vecinner}
				\langle M, N  \rangle=\langle \mathrm{vec}(M), \mathrm{vec}(N)\rangle.
			\end{equation}   
			where $M, N\in  L(\mathcal{H}_2,\mathcal{H}_1).$ Here the inner product $ \langle M, N  \rangle$ is defined as $\mathrm{Tr}\big[M^\dagger N\big]$  for  $M, N\in  L(\mathcal{H}_2,\mathcal{H}_1)$ and $\langle |u\rangle, |v\rangle \rangle$ is defined as $\langle u|v\rangle$ for $ |u\rangle, |v\rangle \in \mathcal{H}_1\otimes\mathcal{H}_2.$

			Let $\mathcal{H}_{A}\otimes\mathcal{H}_B$ be a composed bipartite systems whose local dimensions  are $d_A$ and $d_B$ respectively. Suppose that $\{|i\rangle_{A}| i\in\mathbb{Z}_{d_A}\}$ and $\{|j\rangle_{B}| j\in\mathbb{Z}_{d_B}\}$ are the computational bases of  systems $A$ and $B$ respectively. Given an orthogonal set $\cS=\{|\psi_k\rangle\}_{k=1}^n$ of pure states in  $\mathcal{H}_{A}\otimes\mathcal{H}_B$, our goal is to determine whether there is a nontrivial orthogonality preserving local measurement to this set. 
			
			For each $|\psi_k\rangle$, we can express it as the form  $|\psi_k\rangle=\sum_{ i\in \mathbb{Z}_{d_A}}|i\rangle_{A}|\psi_{k,i}\rangle_{B} $
			where $|\psi_{k,i}\rangle_{B}$ may not be normalized and even may be equal to zero.
			If $E$ is a POVM element on subsystem $B$ that  preserves the orthogonality relation, then we have the   equalities $
			\langle \psi_k|\mathbb{I}_{A} \otimes E |\psi_l\rangle=0, \text{ for } k\neq  l.$
			Substituting the expressions of  $|\psi_k\rangle$  and $|\psi_l\rangle$  to  these equations, one obtain that
			\begin{equation*}\label{eq:relationdec}
				\sum_{i\in \mathbb{Z}_{d_A}} \tr{\Big[E |\psi_{k,i}\rangle_{B}\langle \psi_{l,i}|\Big]} =\sum_{i\in \mathbb{Z}_{d_A}} {}_{B}\langle \psi_{k,i}| E |\psi_{l,i}\rangle_{B}=0. 
			\end{equation*}
			Applying  Eqs. \eqref{eq:vecmap} and \eqref{eq: vecinner} to the left hand side  of the above equation, one obtains that  
			\begin{equation}\label{eq:relationvec}
				\langle \mathrm{vec}{(E)},\sum_{i\in \mathbb{Z}_{d_A}}    |\psi_{k,i}\rangle_{B}\otimes |\overline{\psi_{l,i}}\rangle_{B}\rangle=0
			\end{equation}
			whenever $k\neq l.$ That is, for each pair $(k,l)$ with $1\leq k\neq l \leq n$, the vector $\sum_{i\in \mathbb{Z}_{d_A}}    |\psi_{k,i}\rangle_{B}\otimes |\overline{\psi_{l,i}}\rangle_{B} $ is orthogonal to the vector $\mathrm{vec}{(E)}$ in the linear space $\mathcal{H}_B\otimes \mathcal{H}_B.$ Let $\mathcal{D}_{A|B}(\cS)$ denote the linear subspace of $\mathcal{H}_B\otimes \mathcal{H}_B$  spanned by $\sum_{i\in \mathbb{Z}_{d_A}}    |\psi_{k,i}\rangle_{B}\otimes |\overline{\psi_{l,i}}\rangle_{B} $ with  $1\leq k\neq l \leq n$. As one notes that $\mathrm{vec}(\mathbb{I}_B)$ is a nonzero vector that satisfies all the relations in Eq. \eqref{eq:relationvec}, we always have 
		\begin{equation}\label{eq:ineq}
		\mathrm{dim}_\mathbb{C}[\mathcal{D}_{A|B}(\cS)]\leq d_B^2-1.
		\end{equation}
			Moreover, as the map $\mathrm{vec}$ is an  isometry, if $ \mathrm{dim}_\mathbb{C}[\mathcal{D}_{A|B}(\cS)]= d_B^2-1$, one can conclude that the POVM measurement $M$ that satisfies all the relations in  Eq. \eqref{eq:relationvec} must be proportional to $\mathbb{I}_B.$  Similarly, we can define a subspace $\mathcal{D}_{B|A}(\cS)$ of  $\mathcal{H}_A\otimes \mathcal{H}_A$ in which case we use the decomposition of  $|\psi_k\rangle=\sum_{ j\in \mathbb{Z}_{d_B}}|j\rangle_{B}|\psi_{k,j}\rangle_{A}$
			where $|\psi_{k,j}\rangle_{A}$ may not be normalized and even may be equal to zero.

			\begin{theorem}\label{thm:AlgorithmBasis}
				Let $\cS$ be an orthogonal   set of  pure states in  $\mathcal{H}_{A}\otimes\mathcal{H}_B$   whose local dimensions  are $d_A$ and $d_B$ respectively.  Let $ \mathcal{D}_{A|B}(\cS)$ and $\mathcal{D}_{B|A}(\cS)$ be the linear spaces  defined as above. Then the set  $\cS$  is locally stable if and only if   both of the following equalities are satisfied $$\mathrm{dim}_\mathbb{C}[\mathcal{D}_{A|B}(\cS)]= d_B^2-1  \text{ and }  \mathrm{dim}_\mathbb{C}[\mathcal{D}_{B|A}(\cS)]= d_A^2-1.$$
			\end{theorem}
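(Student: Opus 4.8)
The plan is to prove the two implications by reducing local stability to a pair of single‑party statements and then matching each of those to the relevant dimension equality through the $\mathrm{vec}$ correspondence already set up above. I would first record the reduction that, for a single party (say $B$), \emph{only trivial} orthogonality‑preserving measurements can be performed on $B$ if and only if \emph{every} positive semidefinite orthogonality‑preserving operator $E$ on $B$ is a nonnegative multiple of $\mathbb{I}_B$. The nontrivial direction is a construction: given such an $E$ not proportional to $\mathbb{I}_B$, rescale it by a small positive constant so that $0\leq E\leq\mathbb{I}_B$; since $\cS$ is an orthogonal set, $\mathbb{I}_B$ is itself orthogonality preserving, and the condition $\langle\psi_k|\mathbb{I}_A\otimes(\cdot)|\psi_l\rangle=0$ (for all $k\neq l$) is linear, so $\mathbb{I}_B-E$ is orthogonality preserving as well; hence $\{E,\mathbb{I}_B-E\}$ is a genuine nontrivial orthogonality‑preserving POVM on $B$.

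Next I would pass from the convex cone of such operators to the full complex solution space. Let $V_B\subseteq L(\mathcal{H}_B,\mathcal{H}_B)$ be the space of all $M$ with $\langle\psi_k|\mathbb{I}_A\otimes M|\psi_l\rangle=0$ for every $k\neq l$. The derivation leading to Eq.~\eqref{eq:relationvec} shows $M\in V_B$ iff $\mathrm{vec}(M)\perp\mathcal{D}_{A|B}(\cS)$, so $\dim_\mathbb{C}V_B=d_B^2-\dim_\mathbb{C}\mathcal{D}_{A|B}(\cS)$, and $\mathbb{I}_B\in V_B$ since $\cS$ is orthogonal. A one‑line check (conjugate the defining equations and swap $k\leftrightarrow l$) shows $V_B$ is closed under the adjoint, hence $V_B$ is the complexification of its Hermitian part and $\dim_\mathbb{R}(V_B\cap\mathrm{Herm})=\dim_\mathbb{C}V_B$. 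Therefore: if $\dim_\mathbb{C}\mathcal{D}_{A|B}(\cS)=d_B^2-1$ then $\dim_\mathbb{C}V_B=1$, so $V_B=\mathbb{C}\,\mathbb{I}_B$ (indeed $\mathrm{vec}(\mathbb{I}_B)$ already spans the $1$‑dimensional space $\mathcal{D}_{A|B}(\cS)^\perp$, as observed before the theorem), and every orthogonality‑preserving POVM element is a scalar multiple of $\mathbb{I}_B$; conversely, if $\dim_\mathbb{C}\mathcal{D}_{A|B}(\cS)\leq d_B^2-2$ then $\dim_\mathbb{C}V_B\geq2$, so $V_B\cap\mathrm{Herm}$ strictly contains $\mathbb{R}\,\mathbb{I}_B$, and shifting a Hermitian element of $V_B$ not proportional to $\mathbb{I}_B$ by a suitable multiple of $\mathbb{I}_B$ yields the positive non‑scalar operator required for the POVM construction above. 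This establishes: only trivial orthogonality‑preserving measurements can be performed on $B$ iff $\dim_\mathbb{C}\mathcal{D}_{A|B}(\cS)=d_B^2-1$.

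Running the identical argument with $A$ and $B$ interchanged — now using $|\psi_k\rangle=\sum_{j\in\mathbb{Z}_{d_B}}|j\rangle_B|\psi_{k,j}\rangle_A$ and the subspace $\mathcal{D}_{B|A}(\cS)\subseteq\mathcal{H}_A\otimes\mathcal{H}_A$ — gives that only trivial orthogonality‑preserving measurements can be performed on $A$ iff $\dim_\mathbb{C}\mathcal{D}_{B|A}(\cS)=d_A^2-1$. By Definition~\ref{def: locally stable}, $\cS$ is locally stable exactly when both parties are constrained to trivial orthogonality‑preserving measurements, which is precisely the conjunction of the two displayed equalities. I expect the only delicate point to be the bookkeeping at the interface between the convex cone of POVM elements and the complex linear space $V_B$: one needs both that $\mathbb{I}_B\in V_B$ (so that complements $\mathbb{I}_B-E$ remain orthogonality preserving) and that $V_B$ is adjoint‑closed (so that the complex dimension controls the Hermitian, hence the positive, part). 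The rest is the elementary linear algebra of orthogonal complements already in place before the theorem.
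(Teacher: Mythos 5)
Your proposal is correct and takes essentially the same route as the paper: translate the orthogonality-preserving condition into orthogonality to $\mathcal{D}_{A|B}(\cS)$ via $\mathrm{vec}$, use a conjugation/Hermiticity argument to extract a non-scalar Hermitian solution whenever the codimension is at least $2$, and convert it into a nontrivial two-outcome POVM by shifting and rescaling with $\mathbb{I}_B$. The only difference is presentational: you run the Hermiticity step on the adjoint-closed solution space $V_B$, whereas the paper works dually with the real span of Hermitian combinations of the generators of $\mathcal{D}_{A|B}(\cS)$ — an equivalent (and arguably cleaner) packaging.
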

			\begin{proof} \emph{Sufficiency.}   If $\mathrm{dim}_\mathbb{C}[\mathcal{D}_{A|B}(\cS)]= d_B^2-1 $, by the previous statement, the Bob's site can only start with a trivial orthogonality preserving measurement. If $\mathrm{dim}_\mathbb{C}[\mathcal{D}_{B|A}(\cS)]= d_A^2-1$, so does Alice. Therefore, the set $\cS$ is locally stable by definition.
				
				\emph{Necessity.}    Suppose not, without loss of generality, we could assume that $\mathrm{dim}_\mathbb{C}[\mathcal{D}_{A|B}(\cS)]\leq d_B^2-2$ as by construction we always have $\mathrm{dim}_\mathbb{C}[\mathcal{D}_{A|B}(\cS)]\leq d_B^2-1.$  We define  $\cC_{A|B}(\cS)$ as the   $\mathbb{C}$-linear space  spanned by   $$\{\sum_{i\in \mathbb{Z}_{d_A}}    |\psi_{k,i}\rangle_{B}  \langle{\psi_{l,i}}|\ \big |\   1\leq k\neq l \leq n\}.$$  Because that the map $vec$ is an isometry,   the space   $\cC_{A|B}(\cS)$ has the same dimension as $\mathcal{D}_{A|B}(\cS).$ Note that for each $1\leq k\neq l\leq n$,  the set with two matrices
				$\{\sum_{i\in \mathbb{Z}_{d_A}}    |\psi_{k,i}\rangle_{B}  \langle{\psi_{l,i}}|,  \sum_{i\in \mathbb{Z}_{d_A}}    |\psi_{l,i}\rangle_{B}  \langle{\psi_{k,i}}|\}$ is linearly equivalent to the set with the following two  Hermitian matrices
				$$ \begin{array}{l}
					H_{kl}:= \displaystyle	\sum_{i\in \mathbb{Z}_{d_A}}    |\psi_{k,i}\rangle_{B}  \langle{\psi_{l,i}}|+|\psi_{l,i}\rangle_{B}  \langle{\psi_{k,i}}|, \\[2mm]
					H_{lk}:=\displaystyle  \sum_{i\in \mathbb{Z}_{d_A}} i|\psi_{k,i}\rangle_{B}  \langle{\psi_{l,i}}|-i|\psi_{l,i}\rangle_{B}  \langle{\psi_{k,i}}|. 
				\end{array}
				$$
				With this note, we have 
				$$\cC_{A|B}(\cS)=\text{span}_\mathbb{C}\{H_{kl} \ \big | \ 1\leq k\neq l\leq n\}.$$
				Now we define $\cR_{A|B}(\cS)=\text{span}_\mathbb{R}\{H_{kl} \ \big | \ 1\leq k\neq l\leq n\}.$
				We claim that 
				$$\text{dim}_\mathbb{R}[\cR_{A|B}(\cS)]\leq  \text{dim}_\mathbb{C}[\cC_{A|B}(\cS)].$$
				Clearly, $\cR_{A|B}(\cS)\subseteq \cC_{A|B}(\cS)$ and all the matrices in $\cR_{A|B}(\cS)$ are Hermitian. To prove the above claim, it is sufficient to prove that if  $H_1,H_2,\cdots,H_L\in \cR_{A|B}(\cS)$  are $\mathbb{R}$-linearly independent, then they are also  $\mathbb{C}$-linearly independent. If not, there exists not all zero $x_j+iy_j\in \mathbb{C}$, $1\leq j\leq L$ (here $x_j,y_j\in \mathbb{R}$ and we can always assume that some $x_j\neq 0$, otherwise multiplying both sides by the complex number $i$ )  such that 
				\begin{equation}\label{eq:Main_relation}
					\sum_{j=1}^L(x_j+iy_j)H_j=\boldsymbol{0}.
				\end{equation} 
				Taking the complex conjugacy to both sides, we obtain 
				\begin{equation}\label{eq:Main_relation_conjugate}
					\sum_{j=1}^L(x_j-iy_j)H_j=\boldsymbol{0}.
				\end{equation} 
				From Eqs. \eqref{eq:Main_relation} and \eqref{eq:Main_relation_conjugate}, we obtain that
				$\sum_{j=1}^L 2x_j H_j=\boldsymbol{0}$ which is contradicted with the assumption that $H_1,H_2,\cdots,H_L$  are $\mathbb{R}$-linearly independent.  Therefore, we have 
				$$\text{dim}_\mathbb{R}[\cR_{A|B}(\cS)]\leq  \text{dim}_\mathbb{C}[\cC_{A|B}(\cS)]\leq d_B^2-2.$$
				We know that all the $d_B\times d_B$ Hermitian matrices form an $\mathbb{R}$-linear space of dimensional $d_B^2$. As $\mathbb{I}_{d_B}$ lies in the completion space of $\cR_{A|B}(\cS)$ and $\text{dim}_\mathbb{R}[\cR_{A|B}(\cS)]\leq   d_B^2-2$, there exists at least some other nonzero Hermitian matrix said $E_B$ which is orthogonal to the space $\cR_{A|B}(\cS)$ and the identity  matrix $\mathbb{I}_{d_B}$. Multiplying some nonzero real number, we can always assume that each   eigenvalue   $\lambda_j$ of $E_B$  satisfies $|\lambda_j|\leq 1/4$. Then we have both $E_1:=\mathbb{I}_{d_B}/2+E_B$ and $E_2:=\mathbb{I}_{d_B}/2-E_B$ are semidefinite positive and $E_1+E_2=\mathbb{I}_{d_B}$. Therefore, $\{E_1, E_2\}$ is a POVM. By definition, $E_1, E_2$ lie in the  completion space of $ \cR_{A|B}(\cS)$, hence it is an orthogonality preserving measurement with respect to the set $\cS$.  Moreover, it is easy to see that it is a nontrivial measurement. So this is contradicted to the condition that $\cS$ is locally stable. Therefore, we should  have $\mathrm{dim}_\mathbb{C}[\mathcal{D}_{A|B}(\cS)]\geq d_B^2-1.$ Combining this with Eq. \eqref{eq:ineq}, we deduce $\mathrm{dim}_\mathbb{C}[\mathcal{D}_{A|B}(\cS)]=d_B^2-1.$  This completes the proof.
			\end{proof}	
			
			As the subspace $\mathcal{D}_{B|A}(\cS)$  (w.r.t. $\mathcal{D}_{A|B}(\cS)$) is completely determined by  $(n-1)n$  generators, therefore,  we can use an $(n-1)n\times d_A^2$ (w.r.t.  $(n-1)n\times d_B^2$)  matrix to represent it. And we denote the matrix as $D_{B|A}(\cS)$ (w.r.t. $D_{A|B}(\cS)$).
			\begin{example}\label{ex:Bell}
				The set $\cS_B$ with three Bell states is locally stable, 
				where 
				$$ \cS_B:=\{|\psi_{\pm}\rangle:=|00\rangle\pm|11\rangle, |\phi_+\rangle:=|01\rangle+|10\rangle\}.$$
			\end{example}
			One can easily calculate out the matrix   $D_{A|B}(\cS_B)$ as follows
			$$
			D_{A|B}(\cS_B)=\left[
			\begin{array}{rrrr}
				1& 0 &0 &-1\\
				0& 1& 1& 0\\
				1& 0 &0 &-1\\
				0& 1& -1& 0\\
				0& 1& 1& 0\\
				0& -1& 1& 0\\
			\end{array}
			\right].
			$$
			For example, consider the pair $(|\psi_+\rangle,|\phi_+\rangle)$ of  states. We have  $|\psi_+\rangle=|0\rangle|\psi_0\rangle+|1\rangle|\psi_1\rangle$ and $|\phi_+\rangle=|0\rangle|\phi_0\rangle+|1\rangle|\phi_1\rangle$ where $|\psi_0\rangle=|\phi_1\rangle=|0\rangle $ and  $|\psi_1\rangle=|\phi_0\rangle=|1\rangle.$ It contributes the vector $|\phi_0\rangle|\bar{\psi}_0\rangle+|\phi_1\rangle|\bar{\psi}_1\rangle=|1\rangle|0\rangle+|0\rangle|1\rangle$, i.e., the vector $[0, 1, 1, 0]$ under the computational basis $\{|00\rangle,|01\rangle,|10\rangle,|11\rangle\}.$  
			And $\mathrm{rank}(D_{A|B}(\cS_B))=3.$ By symmetry, we can conclude that $\cS_B$ is locally stable.
			
			\vskip 5pt

		  It is not difficult to generalize  the results of Theorem \ref{thm:AlgorithmBasis} to multipartite systems.

			\begin{theorem}
				\label{thm:AlgorithmBasis_Strong}
				Let $\cS$ be an orthogonal   set of  pure states in  $\otimes_{i=1}^N\mathcal{H}_{A_i}$   whose local dimension  is  $\mathrm{dim}_\mathbb{C}(\mathcal{H}_{A_i})=d_{i}$. Then we have the following statements:
				
				\begin{enumerate}[(a)]
					\item  The set  $\cS$  is  locally  stable 
					if and only if   all the equalities $\mathrm{dim}_\mathbb{C}[\mathcal{D}_{\hat{A_i}|A_i}(\cS)]= d_i ^2-1   ( i=1, \cdots,N)$ are satisfied where we use the  notation \begin{equation}\label{eq:hat}
					    \hat{A}_i:= \{A_1,A_2,\cdots,A_N\}\setminus\{A_i\}
					\end{equation}
					i.e., the union of all subsystems except the $A_i$.
					\item  The set  $\cS$  is  of strongest nonlocality if  and only if  all the equalities $\mathrm{dim}_\mathbb{C}[\mathcal{D}_{A_i|\hat{A_i}}(\cS)]= \hat{d_i}^2-1   ( i=1, \cdots,N )$ are satisfied where we  denote \begin{equation}\label{eq:hatd}\hat{d}_i=(\prod_{j=1}^N d_j)/d_i.	\end{equation}
					
				\end{enumerate}
			\end{theorem}
			\begin{proof}
				The proof is similar with the proof in Theorem \ref{thm:AlgorithmBasis}. In fact, the essence in the  proof  of Theorem \ref{thm:AlgorithmBasis} shows that $\mathrm{dim}_\mathbb{C}[\mathcal{D}_{\hat{A_i}|A_i}(\cS)]= d_i ^2-1$ if and only if the $A_i$ party can only perform a trivial orthogonality preserving measurement. The same reason that  $\mathrm{dim}_\mathbb{C}[\mathcal{D}_{A_i|\hat{A}_{i}}(\cS)]= \hat{d}_i ^2-1$ if and only if the $\hat{A}_i$ party can only perform a trivial orthogonality preserving measurement. With these two equivalent relations, it is easy to recover the above two statements.
			\end{proof}
	\vskip 5pt 
 
	\add{	For a given orthogonal set of product states $\mathcal{S}=\{\otimes_{n=1}^N |\psi^{(n)}_i \rangle\}$ in $\otimes_{i=1}^N\mathcal{H}_{A_i}$.	 Define  $J_n$ to be the set
		$$\{(i,j)\mid \langle \psi^{(n)}_i| \psi^{(n)}_j\rangle  =0;  \langle \psi^{(m)}_i| \psi^{(m)}_j\rangle  \neq 0\  \forall m\neq n\}.$$ The Theorem 3 of Ref.~ \cite{Cohen22} shows that if for every party $n$, the set $\mathcal{L}_n:=\{|\psi_i^{(n)} \rangle\langle  \psi^{(n)}_j| \ \big | (i,j)\in J_n\} $ spans a space of dimension $d_n^2
		-1$, then the set $\mathcal{S}$  cannot be perfectly discriminated under asymptotic LOCC. Using this result and noting that $\mathcal{D}_{\hat{A}_n|A_n}(\mathcal{S})=\mathcal{L}_n$, one could easily deduce the following corollary by the statement (a) of  Theorem  \ref{thm:AlgorithmBasis_Strong}.}
			
	\begin{corollary}
	\add{	 Let $\cS$ be an orthogonal   set of  pure states in  $\otimes_{i=1}^N\mathcal{H}_{A_i}$   whose local dimension  is  $\mathrm{dim}_\mathbb{C}(\mathcal{H}_{A_i})=d_{i}$. If $\cS$ is locally stable, then  the perfect discrimination is impossible   by asymptotic LOCC  wherein an error is allowed but must vanish in the limit of an infinite number of rounds.}
	\end{corollary}

			Using Theorem  \ref{thm:AlgorithmBasis_Strong}, we could  derive a bound on the cardinality of a locally stable set in  multipartite systems. 
			
			\begin{theorem}\label{thm:AlgorithmBasis_Bounds}  $($\emph{\textbf{Bounds on the sizes of  locally stable sets}}$)$  
				Let $\cS$ be an orthogonal   set of  pure states in  $\otimes_{i=1}^N\mathcal{H}_{A_i}$   whose local dimension  is  $\text{dim}_\mathbb{C}(\mathcal{H}_{A_i})=d_{i}$.The we have the following statement:
				
				\begin{enumerate}[(a)]
					\item 
					If  the set  $\cS$  is  locally  stable, then  
					$|\cS| \geq \max_i\{d_i+1\}.$ Consequently,   $$s(d_1,d_2,\cdots,d_N)\geq \max_i\{d_i+1\}.$$
					\item  If    the set  $\cS$  is of strongest nonlocality,  then 	$$|\cS| \geq \max_i\{\hat {d}_i+1\}$$ (see Eq. \eqref{eq:hatd} for definition of $\hat {d}_i$). Consequently,   $$S(d_1,d_2,\cdots,d_N)\geq \max_i\{\hat {d}_i+1\}.$$
				\end{enumerate}
				
			\end{theorem}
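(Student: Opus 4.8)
The plan is to read both bounds directly off Theorem~\ref{thm:AlgorithmBasis_Strong}, combined with the elementary observation that each of the subspaces $\mathcal{D}_{\star|\star}(\cS)$ carries a distinguished spanning set whose size is controlled by $N:=|\cS|$. Concretely, $\mathcal{D}_{\hat{A_i}|A_i}(\cS)$ (resp.\ $\mathcal{D}_{A_i|\hat{A_i}}(\cS)$) is, by the very way it was constructed, generated by one vector for each ordered pair $(k,l)$ of distinct indices in $\{1,\dots,N\}$, so its complex dimension is at most $N(N-1)$. Local stability forces that dimension to equal $d_i^2-1$ (resp.\ $\hat d_i^2-1$), and the resulting inequality $N(N-1)\ge d_i^2-1$ rearranges to the claimed linear bound.

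For part (a): assume $\cS$ is locally stable and fix an index $i$. By the equivalence recorded in the proof of Theorem~\ref{thm:AlgorithmBasis_Strong} --- namely, that the $A_i$-party can perform only trivial orthogonality-preserving measurements if and only if $\mathrm{dim}_\mathbb{C}[\mathcal{D}_{\hat{A_i}|A_i}(\cS)]=d_i^2-1$ --- local stability gives $\mathrm{dim}_\mathbb{C}[\mathcal{D}_{\hat{A_i}|A_i}(\cS)]=d_i^2-1$. Since this subspace is spanned by $N(N-1)$ vectors, we get $N(N-1)\ge d_i^2-1$. Were $N\le d_i$, we would obtain $N(N-1)\le d_i(d_i-1)=d_i^2-d_i\le d_i^2-2<d_i^2-1$ (using $d_i\ge 2$), a contradiction; hence $N\ge d_i+1$. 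As $i$ was arbitrary, $|\cS|\ge\max_i\{d_i+1\}$, and taking the minimum over all locally stable $\cS$ yields $s(d_1,\dots,d_n)\ge\max_i\{d_i+1\}$.

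For part (b): assume $\cS$ is locally stable for every bipartition. In particular it is locally stable for the bipartition $A_i\,|\,\hat{A_i}$, so the $\hat{A_i}$-party can perform only trivial orthogonality-preserving measurements; by the companion equivalence in the proof of Theorem~\ref{thm:AlgorithmBasis_Strong} this forces $\mathrm{dim}_\mathbb{C}[\mathcal{D}_{A_i|\hat{A_i}}(\cS)]=\hat d_i^2-1$. Again $\mathcal{D}_{A_i|\hat{A_i}}(\cS)$ is spanned by $N(N-1)$ vectors, so $N(N-1)\ge\hat d_i^2-1$, and the same manipulation (now using $\hat d_i\ge 2$, which holds whenever $n\ge 2$ and every $d_j\ge 2$) gives $N\ge\hat d_i+1$. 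Maximizing over $i$ and taking the minimum over such $\cS$ gives $S(d_1,\dots,d_n)\ge\max_i\{\hat d_i+1\}$.

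I do not expect a serious obstacle: the argument is essentially bookkeeping layered on Theorem~\ref{thm:AlgorithmBasis_Strong}. The two points that need a little care are (i) invoking the full \emph{equivalence} extracted in the proof of Theorem~\ref{thm:AlgorithmBasis_Strong} --- that local stability of the relevant (sub)party is \emph{equivalent} to, not merely implied by, the corresponding $\mathcal{D}$-space attaining its maximal dimension --- and (ii) counting the generators of $\mathcal{D}_{\star|\star}(\cS)$ correctly: there are $N(N-1)$ ordered pairs of distinct states and no finer count is available, since (as the proof of Theorem~\ref{thm:AlgorithmBasis} shows) each unordered pair $\{k,l\}$ already contributes two independent Hermitian generators $H_{kl},H_{lk}$. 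Finally, one should note in passing that the degenerate situations where some $d_i=1$ (or $n=1$ in part (b)) are vacuous or trivial, so the standing assumption that every local dimension is at least $2$ costs nothing.
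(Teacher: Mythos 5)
Your proposal is correct and follows essentially the same route as the paper: local stability forces $\mathrm{dim}_\mathbb{C}[\mathcal{D}_{\hat{A_i}|A_i}(\cS)]=d_i^2-1$ (via the equivalence noted in the proof of Theorem~\ref{thm:AlgorithmBasis_Strong}), the spanning set of $N(N-1)$ vectors gives $N(N-1)\ge d_i^2-1$, and this yields $N\ge d_i+1$, with part (b) handled identically using $\hat d_i$. Your write-up is in fact slightly more careful than the paper's, which compresses the final inequality to ``$|\cS|>d_i$'' and cites the theorem by the wrong label.
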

			\noindent\emph{Proof.}  (a)
			By Theorem \ref{thm:AlgorithmBasis_Bounds}, we have  $\mathcal{D}_{\hat{A_i}|A_i}(\cS)=d_i^2-1$. And by the definition of $\mathcal{D}_{\hat{A_i}|A_i}(\cS)$, we have   $$d_i^2-1=\mathrm{dim}_\mathbb{C}[\mathcal{D}_{\hat{A_i}|A_i}(\cS)]\leq|\cS|^2-|\cS|.$$ Therefore, $|\cS|>d_i$ for each $1\leq i\leq N$.
			
			(b)The proof is similar with the above  proof. \qed
			
			\vskip 12pt

			
			\vskip 12pt

			\section{Two constructions of strongest nonlocal  sets and upper bounds on  $S(d_1,d_2,\cdots,d_N)$}\label{sec:twoconstruction}

			Generally, it is difficult to show that the possible orthogonality preserving local measurement for each subsystem is trivial. We list two useful     lemmas (developed in Ref.    \cite{Shi21})  for verifying  the trivialization of such measurement.

			\begin{lemma}[Block Zeros Lemma]\label{lem:zero}
				Let  an  $n\times n$ matrix $E=(a_{i,j})_{i,j\in\bbZ_n}$ be the matrix representation of an operator  $E$  under the basis  $\cB:=\{\ket{0},\ket{1},\ldots,\ket{n-1}\}$. Given two nonempty disjoint subsets $\cS$ and $\cT$ of $\cB$, assume  that  $\{\ket{\psi_i}\}_{i=0}^{s-1}$, $\{\ket{\phi_j}\}_{j=0}^{t-1}$ are two orthogonal sets  spanned by $\cS$ and $\cT$ respectively, where $s=|\cS|,$ and $t=|\cT|.$  If  $\langle \psi_i| E| \phi_j\rangle =0$
				for any $i\in \mathbb{Z}_s,j\in\mathbb{Z}_t,$ then   $\langle x|E|y\rangle=\langle y|E|x\rangle=0$ for $|x\rangle\in \mathcal{S}$ and $|y\rangle\in \mathcal{T}.$
			\end{lemma}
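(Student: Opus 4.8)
The plan is to work entirely within the matrix representation of $E$ under the basis $\cB$, splitting $\cB$ into the block $\cS$, the block $\cT$, and the remaining basis vectors $\cB\setminus(\cS\cup\cT)$. Write $|x\rangle\in\lin(\cS)$ and $|y\rangle\in\lin(\cT)$; the goal is to show the two scalars $\langle x|E|y\rangle$ and $\langle y|E|x\rangle$ vanish given that $E$ kills every pair $(|\psi_i\rangle,|\phi_j\rangle)$ from the two \emph{orthonormal} families spanning these subspaces. The key observation is that $\{|\psi_i\rangle\}_{i=0}^{s-1}$ is an orthonormal basis of $\lin(\cS)$ and $\{|\phi_j\rangle\}_{j=0}^{t-1}$ is an orthonormal basis of $\lin(\cT)$, so any $|x\rangle\in\lin(\cS)$ expands as $|x\rangle=\sum_i c_i|\psi_i\rangle$ and any $|y\rangle\in\lin(\cT)$ expands as $|y\rangle=\sum_j d_j|\phi_j\rangle$.

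First I would note that $\langle x|E|y\rangle=\sum_{i,j}\overline{c_i}\,d_j\,\langle\psi_i|E|\phi_j\rangle=0$ by bilinearity, directly from the hypothesis; this handles one of the two claimed equalities with essentially no work. The slightly less immediate part is $\langle y|E|x\rangle=0$, since the hypothesis only controls $E$ in the ``$\cS$-to-$\cT$'' direction, not the reverse. Here I would use that $\cS$ and $\cT$ are \emph{disjoint subsets of the computational basis}: the projectors $P_\cS$ and $P_\cT$ onto $\lin(\cS)$ and $\lin(\cT)$ are diagonal in $\cB$, hence orthogonal projections with $P_\cS P_\cT=0$. Because $|x\rangle=P_\cS|x\rangle$ and $|y\rangle=P_\cT|y\rangle$, we have $\langle y|E|x\rangle=\langle y|P_\cT E P_\cS|x\rangle$, so it suffices to show the off-diagonal block $P_\cT E P_\cS$ of $E$ is zero. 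But the hypothesis says exactly $\langle\psi_i|E|\phi_j\rangle=0$ for all $i,j$, i.e. $P_\cS E P_\cT=0$ on the pair of subspaces; taking adjoints does \emph{not} immediately give $P_\cT E P_\cS=0$ unless $E$ is Hermitian, so this is the point that needs care.

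The main obstacle is therefore precisely this asymmetry, and I expect the intended resolution is that in the application $E$ is a POVM element (hence positive semidefinite, so Hermitian), or else the lemma is really being invoked with the understanding that $E$ is Hermitian; under Hermiticity, $P_\cT E P_\cS=(P_\cS E P_\cT)^\dagger=0$ and the conclusion $\langle y|E|x\rangle=\overline{\langle x|E|y\rangle}=0$ follows. If one does not want to assume Hermiticity, the statement as literally written is about $\langle y|E|x\rangle$ and $\langle x|E|y\rangle$ both vanishing, which forces the Hermitian reading; so in the write-up I would either invoke Hermiticity of $E$ explicitly or restate using the block-matrix language above. The remaining details — that an orthonormal family spanning $\lin(\cS)$ is a basis of it, and that membership in $\lin(\cS)$ is equivalent to being fixed by the diagonal projector $P_\cS$ — are routine and I would dispatch them in a sentence each.
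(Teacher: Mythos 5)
Your proposal is correct, and it isolates the one genuinely delicate point. Note first that the paper does not prove this lemma at all---it is imported from Ref.~\cite{Shi21}---so there is no in-paper argument to compare against; in that source, however, the lemma is stated for $E=M^{\dagger}M$, i.e.\ for a positive semidefinite (hence Hermitian) operator, and that hypothesis has been dropped in the transcription here. As you observe, without it the second equality is simply false: for $n=2$, $\cS=\{\ket{0}\}$, $\cT=\{\ket{1}\}$ and $E=\ketbra{1}{0}$ one has $\bra{0}E\ket{1}=0$ but $\bra{1}E\ket{0}=1$. Your two steps are the right ones: since $s=|\cS|$ pairwise orthogonal nonzero vectors in the $s$-dimensional space $\lin(\cS)$ form a basis of it (and likewise for $\cT$), sesquilinearity of $(\bra{u},\ket{v})\mapsto\bra{u}E\ket{v}$ upgrades the hypothesis to $P_{\cS}EP_{\cT}=0$ for the diagonal projectors $P_{\cS},P_{\cT}$ onto $\lin(\cS),\lin(\cT)$, which already gives $\bra{x}E\ket{y}=0$; the reverse block needs $P_{\cT}EP_{\cS}=(P_{\cS}EP_{\cT})^{\dagger}=0$, i.e.\ Hermiticity of $E$. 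In the only place the lemma is invoked (the proof of Theorem~\ref{thm:SN_MP}), $E=\mathbb{I}_{A_1}\otimes M_x^{\dagger}M_x$ is positive semidefinite, so your Hermitian reading is the intended one; the hypothesis should be restored in the statement. A cosmetic remark: the conclusion quantifies only over basis vectors $\ket{x}\in\cS$, $\ket{y}\in\cT$, whereas your argument yields the (equivalent) statement for all of $\lin(\cS)$ and $\lin(\cT)$; that costs nothing and is in fact what the application uses.
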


			\begin{lemma}[Block Trivial  Lemma]\label{lem:trivial}
				Let  an  $n\times n$ matrix $E=(a_{i,j})_{i,j\in\bbZ_n}$ be the matrix representation of an operator  $E$  under the basis  $\cB:=\{\ket{0},\ket{1},\ldots,\ket{n-1}\}$. Given a nonempty  subset $\cS$  of $\cB$, let $\{\ket{\psi_j} \}_{j=0}^{s-1}$ be an orthogonal  set spanned by $\cS$.     Assume that $\langle \psi_i|E |\psi_j\rangle=0$ for any $i\neq j\in \mathbb{Z}_s$.  If there exists a state $|x\rangle \in\cS$,  such that $ \langle x|E |y\rangle =0$ for all $|y\rangle \in \mathcal{S}\setminus \{|x\rangle\}$  and $\langle x|\psi_j\rangle \neq 0$  for any $j\in \mathbb{Z}_s$, then  $\langle y|E|z\rangle=0$ and $\langle y|E|y\rangle=\langle z|E|z\rangle$ for all $|y\rangle,|z\rangle\in \mathcal{S}$ with $|y\rangle\neq |z\rangle.$
			\end{lemma}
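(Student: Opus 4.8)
\emph{Proof strategy.} The plan is to pass to the subspace $V:=\mathrm{span}_{\mathbb{C}}(\cS)$, which has dimension $s=|\cS|$, and to prove that the compression $F:=PEP$ of $E$ to $V$ --- where $P$ is the orthogonal projector onto $V$ --- is a scalar multiple of $P$. This already yields the lemma: every vector of $\cS$ lies in $V$, so $\bra{u}E\ket{v}=\bra{u}F\ket{v}$ whenever $\ket{u},\ket{v}\in V$, and $F=\beta P$ then forces $\bra{y}E\ket{z}=\beta\braket{y}{z}=0$ for distinct $\ket{y},\ket{z}\in\cS$ and $\bra{y}E\ket{y}=\beta$ for every $\ket{y}\in\cS$.

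First I would diagonalise $F$ in the $\ket{\psi_j}$-basis. Since $\{\ket{\psi_j}\}_{j=0}^{s-1}$ is an orthogonal family of $s$ vectors lying in the $s$-dimensional space $V$, it is an orthogonal basis of $V$, and after normalising it is an orthonormal basis of $V$ (rescaling affects neither $\bra{\psi_i}E\ket{\psi_j}=0$ nor $\braket{x}{\psi_j}\neq 0$). For $\ket{\psi_i},\ket{\psi_j}\in V$ we have $\bra{\psi_i}F\ket{\psi_j}=\bra{\psi_i}E\ket{\psi_j}$, so the hypothesis says precisely that $F$ has vanishing off-diagonal entries in this orthonormal basis; as $F$ is supported on $V$, this means $F=\sum_{j}\mu_j\proj{\psi_j}$ with $\mu_j=\bra{\psi_j}E\ket{\psi_j}$.

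Next I would bring in the distinguished state $\ket{x}$. Let $W:=\mathrm{span}_{\mathbb{C}}(\cS\setminus\{\ket{x}\})$, a hyperplane of $V$. The hypothesis $\bra{x}E\ket{y}=\bra{x}F\ket{y}=0$ for all $\ket{y}\in\cS\setminus\{\ket{x}\}$ says the functional $\ket{v}\mapsto\bra{x}F\ket{v}$ on $V$ annihilates $W$; but $\ket{v}\mapsto\braket{x}{v}$ also annihilates $W$ (because $\cS$ is orthonormal) and is nonzero, and the annihilator of a hyperplane in $V^{\ast}$ is one-dimensional, so there is a scalar $\beta$ with $\bra{x}F=\beta\bra{x}$ on $V$. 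Expanding $\bra{x}=\sum_j\braket{x}{\psi_j}\bra{\psi_j}$ on $V$ and inserting $F=\sum_j\mu_j\proj{\psi_j}$ gives $\sum_j\mu_j\braket{x}{\psi_j}\bra{\psi_j}=\beta\sum_j\braket{x}{\psi_j}\bra{\psi_j}$; comparing the coefficient of each $\bra{\psi_j}$ and cancelling $\braket{x}{\psi_j}\neq 0$ yields $\mu_j=\beta$ for every $j$, hence $F=\beta\sum_j\proj{\psi_j}=\beta P$, as wanted.

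The only delicate step will be identifying $\bra{x}$ as a left eigenvector of $F$: this is exactly where one needs $\cS$ to be a genuinely orthonormal set (so that $\bra{x}$ annihilates $\mathrm{span}(\cS\setminus\{\ket{x}\})$) rather than merely spanning, together with the one-dimensionality of the annihilator of a hyperplane; the hypothesis $\braket{x}{\psi_j}\neq 0$ is then precisely what is required to propagate the single eigenvalue $\beta$ to all of the $\mu_j$. Everything else is routine bookkeeping between the two orthonormal bases $\{\ket{\psi_j}\}$ and $\cS$ of $V$, and Hermiticity of $E$ is never invoked.
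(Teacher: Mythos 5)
Your argument is correct. Note that the paper itself does not prove this lemma --- it imports it from Ref.~\cite{Shi21} --- so there is no in-paper proof to match; what you have written is a valid self-contained proof. The key points all check out: $\cS\subseteq\cB$ is automatically orthonormal, the $s=|\cS|$ pairwise-orthogonal vectors $\ket{\psi_j}$ (each nonzero because $\braket{x}{\psi_j}\neq 0$) form a basis of $V=\mathrm{span}(\cS)$, so the hypothesis $\bra{\psi_i}E\ket{\psi_j}=0$ diagonalizes the compression $F=PEP$ as $\sum_j\mu_j\proj{\psi_j}$; the condition $\bra{x}E\ket{y}=0$ for $\ket{y}\in\cS\setminus\{\ket{x}\}$ makes $\bra{x}F$ and $\bra{x}$ two functionals annihilating the same hyperplane, hence proportional; and $\braket{x}{\psi_j}\neq 0$ then forces all $\mu_j$ equal, giving $F=\beta P$ and the stated conclusions. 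Your version is essentially the standard change-of-basis computation between $\cS$ and $\{\ket{\psi_j}\}$ (as in the source reference) recast in a clean, coordinate-free form via the projector $P$; it also makes explicit that Hermiticity of $E$ is never needed, even though in the application $E=M_x^\dagger M_x$. No gaps.
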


			In this section, we provide two constructions of strongest nonlocal  sets: $\mathcal{S}$ (all but one state are genuinely entangled) and  $\mathcal{S}_G$ (all    states  are genuinely entangled).
			
			Let $\mathcal{H}:=\mathcal{H}_{A_1}\otimes\mathcal{H}_{A_2} \otimes\cdots \otimes\mathcal{H}_{A_N}$ be an $N$ parties quantum systems with dimensional $d_i$ for the $i$th subsystem.
			A string $\boldsymbol{i}=(i_1,i_2,\cdots,i_N)$ in  $\boldsymbol{C}:=\mathbb{Z}_{d_1}\times\mathbb{Z}_{d_2}\times\cdots\times\mathbb{Z}_{d_N}$ is called weight $k$ if there are exactly $k$ nonzero $i_j$'s. And we denote the set of all weight $k$ strings of $\boldsymbol{C}$ as $\boldsymbol{C}_k$ where $0\leq k\leq N$. Set $c_k:=|\boldsymbol{C}_k|,$ i.e., the number of elements in   $\boldsymbol{C}_k$. For each $k\in\{0,1,\cdots, N-1\}$, we define 
			
			$$\mathcal{S}_k:=\{|\Psi_{k,i}\rangle\in \mathcal{H}\  \big|\  i\in\mathbb{Z}_{c_k}, |\Psi_{k,i}\rangle:=\sum_{\boldsymbol{j}\in \boldsymbol{C}_k}\omega_{c_k}^{if_k({\boldsymbol{j}})}|\boldsymbol{j}\rangle\}.$$
			Here $f_k: \boldsymbol{C}_k\rightarrow \mathbb{Z}_{c_k}$   is  any fixed  bijection  and $\omega_n:=e^{\frac{2\pi \sqrt{-1}}{n}}.$

			\begin{theorem}\label{thm:SN_MP}
				Let $\mathcal{H}:=\mathcal{H}_{A_1}\otimes\mathcal{H}_{A_2} \otimes\cdots \otimes\mathcal{H}_{A_N}$ be an $N$ parties quantum systems with dimensional $d_i$ for the $i$th subsystem.  The set $\mathcal{S}:=\cup_{k=0}^{N-1} \mathcal{S}_k$ is of strongest   nonlocality. Then  $|\mathcal{S} |= \prod_{n=1}^Nd_n-\prod_{n=1}^N(d_n-1).$ As a consequence, 
				$$ S(d_1,d_2,\cdots,d_N)\leq \prod_{n=1}^Nd_n-\prod_{n=1}^N(d_n-1).$$
			\end{theorem}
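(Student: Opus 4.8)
The plan is to show that for every nonempty proper subset $\mathcal B\subseteq\{A_1,\dots,A_N\}$, any orthogonality preserving POVM element on $\mathcal H_{\mathcal B}:=\bigotimes_{i\in\mathcal B}\mathcal H_{A_i}$ is proportional to the identity. Since a measurement is trivial iff all its POVM elements are, and since local stability of the bipartition $\mathcal B\,|\,\mathcal B^{c}$ (Definition \ref{def: locally stable}) asks exactly that both the $\mathcal B$-party and the $\mathcal B^{c}$-party be confined to trivial orthogonality preserving measurements, letting $\mathcal B$ range over all nonempty proper subsets settles the first assertion (the construction $\mathcal S$ is manifestly invariant under permuting the parties, so there is genuinely only one argument to run). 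Write a computational basis vector of $\mathcal H$ as $|\boldsymbol a\rangle|\boldsymbol b\rangle$ with $\boldsymbol a$ its $\mathcal B$-part and $\boldsymbol b$ its $\mathcal B^{c}$-part, let $w(\cdot)$ be the number of nonzero entries of a string, and for a POVM element $E$ on $\mathcal H_{\mathcal B}$ let $\widetilde E$ act as $E$ on the $\mathcal B$ part and as the identity on $\mathcal B^{c}$, so that $\langle\boldsymbol a\boldsymbol b|\widetilde E|\boldsymbol a'\boldsymbol b'\rangle=E_{\boldsymbol a,\boldsymbol a'}\,\delta_{\boldsymbol b,\boldsymbol b'}$. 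The hypothesis is $\langle\Psi|\widetilde E|\Psi'\rangle=0$ for all distinct $\Psi,\Psi'\in\mathcal S$. I would first record that $\mathcal S$ is orthogonal (the $\mathcal S_k$ sit on pairwise disjoint supports and each $\mathcal S_k$ is a discrete-Fourier set) and that $|\mathcal S|=\sum_{k=0}^{N-1}c_k=|\boldsymbol C|-c_N=\prod_n d_n-\prod_n(d_n-1)$, since a weight-$N$ string is one with all entries nonzero; the claimed bound on $S(d_1,\dots,d_n)$ is then immediate from local stability under every bipartition.

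Two structural facts drive everything: for each $k$, $\mathcal S_k$ is an orthogonal basis of the coordinate subspace $V_k:=\mathrm{span}\{|\boldsymbol j\rangle:\boldsymbol j\in\boldsymbol C_k\}$, and $\langle\boldsymbol j|\Psi_{k,i}\rangle=\omega_{c_k}^{i f_k(\boldsymbol j)}\ne0$ for every $\boldsymbol j\in\boldsymbol C_k$. Step 1 would use the Block Zeros Lemma (Lemma \ref{lem:zero}): for distinct $k,k'\in\{0,\dots,N-1\}$, the disjoint basis subsets $\boldsymbol C_k,\boldsymbol C_{k'}$, the orthogonal bases $\mathcal S_k,\mathcal S_{k'}$ of $V_k,V_{k'}$, and the relations $\langle\Psi_{k,i}|\widetilde E|\Psi_{k',i'}\rangle=0$ (valid for all $i,i'$ since $\mathcal S_k\cap\mathcal S_{k'}=\emptyset$) yield $\langle\boldsymbol j|\widetilde E|\boldsymbol j'\rangle=0$ for all $\boldsymbol j\in\boldsymbol C_k,\boldsymbol j'\in\boldsymbol C_{k'}$. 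Given $\boldsymbol a\ne\boldsymbol a'$ with $w(\boldsymbol a)\ne w(\boldsymbol a')$, apply this with $k=w(\boldsymbol a)$, $k'=w(\boldsymbol a')$ (both lie in $\{0,\dots,|\mathcal B|\}\subseteq\{0,\dots,N-1\}$) and $\boldsymbol j=(\boldsymbol a,\boldsymbol 0)$, $\boldsymbol j'=(\boldsymbol a',\boldsymbol 0)$: it gives $E_{\boldsymbol a,\boldsymbol a'}=0$. Hence $\widetilde E$ is block diagonal by weight, $E=\bigoplus_{w=0}^{|\mathcal B|}E^{(w)}$ with $E^{(w)}$ the restriction to $\mathrm{span}\{|\boldsymbol a\rangle:w(\boldsymbol a)=w\}$.

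Step 2 would induct on $w$ to show $E^{(w)}=\mu_w\mathbb I$ with all $\mu_w$ equal. The base case $E^{(0)}$ is a single diagonal entry. For $w\ge1$, assume $E^{(w-1)}=\mu_{w-1}\mathbb I$; choose a $\mathcal B^{c}$-string $\boldsymbol b_0$ of weight $1$ and any $\mathcal B$-string $\boldsymbol a_0$ of weight $w-1$, and set $\boldsymbol j_0:=(\boldsymbol a_0,\boldsymbol b_0)\in\boldsymbol C_w$. For every $\boldsymbol j=(\boldsymbol a,\boldsymbol b)\in\boldsymbol C_w\setminus\{\boldsymbol j_0\}$ one has $\langle\boldsymbol j_0|\widetilde E|\boldsymbol j\rangle=E_{\boldsymbol a_0,\boldsymbol a}\,\delta_{\boldsymbol b_0,\boldsymbol b}=0$: if $\boldsymbol b\ne\boldsymbol b_0$ the Kronecker delta vanishes, while if $\boldsymbol b=\boldsymbol b_0$ then $w(\boldsymbol a)=w-1=w(\boldsymbol a_0)$ with $\boldsymbol a\ne\boldsymbol a_0$, so $E_{\boldsymbol a_0,\boldsymbol a}=0$ by the inductive hypothesis. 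Together with $\langle\Psi_{w,i}|\widetilde E|\Psi_{w,i'}\rangle=0$ $(i\ne i')$ and $\langle\boldsymbol j_0|\Psi_{w,i}\rangle\ne0$ for all $i$, the Block Trivial Lemma (Lemma \ref{lem:trivial}) applied to the basis subset $\boldsymbol C_w$ with pivot $\boldsymbol j_0$ then gives $\langle\boldsymbol j|\widetilde E|\boldsymbol j'\rangle=0$ for all distinct $\boldsymbol j,\boldsymbol j'\in\boldsymbol C_w$ and $\langle\boldsymbol j|\widetilde E|\boldsymbol j\rangle$ constant on $\boldsymbol C_w$. Evaluating the first conclusion at $\boldsymbol b=\boldsymbol b'=\boldsymbol 0$ shows $E^{(w)}$ is diagonal, and the second at $\mathcal B$-strings of weights $w-1$ and $w$ (both realized inside $\boldsymbol C_w$ by appending a weight-$1$ resp. weight-$0$ $\mathcal B^{c}$-string) shows $E^{(w)}=\mu_w\mathbb I$ with $\mu_w=\mu_{w-1}$. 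The induction closes, so $E=\mu\,\mathbb I$ is trivial.

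I expect the real content to be Step 2, and within it the choice of pivot. Step 1 only annihilates entries between $\mathcal B$-strings of different weight, while the relations inside a single layer $\mathcal S_k$ by themselves force only that $\widetilde E|_{V_k}$ be circulant in the Fourier basis $\mathcal S_k$, not diagonal; the Block Trivial Lemma is exactly the device that upgrades ``circulant plus one already-diagonal row'' to ``scalar'', and taking $\boldsymbol b_0$ of weight $1$ is what makes the inductive hypothesis on the strictly smaller block $E^{(w-1)}$ supply that row. This choice also bypasses the only awkward regime, $w>|\mathcal B^{c}|$, in which no layer $\mathcal S_k$ contains a basis vector of the form $(\boldsymbol 0,\boldsymbol b)$ with $w(\boldsymbol b)=w$. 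The rest is routine bookkeeping: checking in each application that the weights involved stay in $\{0,\dots,N-1\}$ and $\{0,\dots,|\mathcal B|\}$ and that the required strings ($\boldsymbol 0$, a weight-$1$ string in $\mathcal B^{c}$, a weight-$(w-1)$ string in $\mathcal B$, etc.) exist, which is guaranteed since both parts of the bipartition are nonempty and every $d_i\ge2$.
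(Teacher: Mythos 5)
Your proof is correct and takes essentially the same route as the paper's: the Block Zeros Lemma first kills all matrix elements between computational basis strings of different weight, and then an induction on the weight, via the Block Trivial Lemma with a pivot whose unmeasured part carries exactly one nonzero entry, forces each weight block to be the same scalar multiple of the identity. The only difference is organizational: the paper runs the argument once for the bipartition $A_1|\hat{A}_1$ (pivot chosen with $j_1\neq 0$) and obtains every other bipartition by symmetry together with the observation that triviality for $N-1$ parties implies triviality for any subset, whereas you run the identical induction directly for an arbitrary $\mathcal{B}|\mathcal{B}^{c}$ with a weight-one string in $\mathcal{B}^{c}$; both are sound.
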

			\begin{proof}
				First, we show that $\hat{A}_1:=A_2A_3\cdots A_N$ can only perform a trivial orthogonality preserving measurement (OPM). Suppose that $\{M_x^\dagger M_x\}_{x\in\mathcal{X}}$ is an  orthogonality preserving measurement with respect to the set $\mathcal{S}$ which is performed by $\hat{A}_1$, i.e., 
				$\langle \Psi|\mathbb{I}_{A_1}\otimes M_x^\dagger M_x |\Phi\rangle=0$
				for any two different $|\Psi\rangle,|\Phi\rangle\in \mathcal{S}$. Set $E:=\mathbb{I}_{A_1}\otimes M_x^\dagger M_x$. Let $k,l\in \mathbb{Z}_N$ and suppose $k\neq l$. As $\boldsymbol{C}_k\cap \boldsymbol{C}_l=\emptyset$, applying {\bf Block Zeros Lemma} to the sets of base vectors corresponding to $\boldsymbol{C}_k$ and $\boldsymbol{C}_l$, we obtain that
				$$ \langle  \boldsymbol{i}_k |E|\boldsymbol{j}_l\rangle=\langle  \boldsymbol{j}_l |E|\boldsymbol{i}_k\rangle=0$$
				for any $\boldsymbol{i}_k\in \boldsymbol{C}_k$ and  $\boldsymbol{j}_l\in \boldsymbol{C}_l$. 
				Now we claim that for any  $k\in \{0,1,\cdots,N-1\}$ if $\boldsymbol{i}_k, \boldsymbol{i}_k'$ are two different strings of $\boldsymbol{C}_k$, then we also have 
				$$ \langle  \boldsymbol{i}_k |E|\boldsymbol{i}_k'\rangle=\langle  \boldsymbol{i}_k' |E|\boldsymbol{i}_k\rangle=0.$$
				Moreover, $\langle  \boldsymbol{i}_k |E|\boldsymbol{i}_k\rangle=\langle  \boldsymbol{0} |E|\boldsymbol{0}\rangle.$	   	 
				As $\boldsymbol{C}_{<N}:=\cup_{k<N} \boldsymbol{C}_k$	    contains $\{0\}\times\mathbb{Z}_{d_2}\times\cdots\times \mathbb{Z}_{d_N} $, from the above relations, one could conclude that $M_x^\dagger M_x\propto \mathbb{I}_{\hat{A}_1}.$ 
				
				In the following, we will give a proof of the above claim by induction. First, the claim is true for $k=0.$ Now we assume that this claim is true for $0\leq k<N-1$. Let $l=k+1$ and fix any  $\boldsymbol{j}_l=(j_1,j_2,\cdots,j_N)\in \boldsymbol{C}_l$ such that $j_1\neq 0$. For any $\boldsymbol{i}_l=(i_1,i_2,\cdots,i_N)\in \boldsymbol{C}_l$ which is different from $\boldsymbol{j}_l$. If $i_1\neq j_1$, 
				$$\langle  \boldsymbol{j}_l |E|\boldsymbol{i}_l\rangle=\langle  \boldsymbol{j}_l |\mathbb{I}_{A_1}\otimes M_x^\dagger M_x|\boldsymbol{i}_l\rangle=0.$$ If $i_1= j_1$,	set $\boldsymbol{j}_k:=(0,j_2,\cdots,j_N)$ and $\boldsymbol{i}_k:=(0,i_2,\cdots,i_N)$, by definition,  they are different strings of $\boldsymbol{C}_k$. Moreover, 
				$$\langle  \boldsymbol{j}_l |E|\boldsymbol{i}_l\rangle=\langle  j_2\cdots j_N|M_x^\dagger M_x|i_2\cdots i_N \rangle=\langle  \boldsymbol{j}_k |E|\boldsymbol{i}_k\rangle=0$$
				by induction. Applying {\bf Block Trivial  Lemma} to the set of base vectors corresponding to $\boldsymbol{C}_l$, the set $\{|\Psi_{l,i}\rangle\}_{i\in \mathbb{Z}_{c_l}}$ and  the vector $|\boldsymbol{j}_l\rangle$, we obtain that
				for any different strings $\boldsymbol{i}_l,\boldsymbol{i}_l'\in \boldsymbol{C}_l$,
				$$ \langle  \boldsymbol{i}_l |E|\boldsymbol{i}_l'\rangle=\langle  \boldsymbol{i}_l' |E|\boldsymbol{i}_l\rangle=0,  \ \ \text{ and }  \langle  \boldsymbol{i}_l |E|\boldsymbol{i}_l\rangle=\langle  \boldsymbol{j}_l |E|\boldsymbol{j}_l\rangle.$$
				Note that $\langle  \boldsymbol{j}_l |E|\boldsymbol{j}_l\rangle$ equals to 
				$$\langle  j_2\cdots j_N|M_x^\dagger M_x|j_2\cdots j_N \rangle=\langle  \boldsymbol{j}_k |E|\boldsymbol{j}_k\rangle=\langle  \boldsymbol{0} |E|\boldsymbol{0}\rangle. $$
				
				This completes the proof of the claim.  Therefore, the last $(N-1)$-parties could only start with a trivial OPM.
				
				By the  symmetric construction, one can also show that any $(N-1)$ parties could only start with a trivial OPM. This statement also implies that any $k$ (where $1\leq k\leq N-1)$ parties could only start with a trivial OPM. 
				
			\end{proof}
			
			Note that the elements in $\mathcal{S}$ are not always with genuine entanglement. In fact, $|\Psi_0\rangle=|{\boldsymbol{0}}\rangle$ is fully product states. Now we claim that except this state, all others are with genuine entanglement. We only need to show that $|\Psi_{k,i}\rangle$ $(1\leq k\leq N-1, i\in \mathbb{Z}_{c_k})$ is entangled for any bipartition  of the subsystems.  We assume that the bipartition is $\{A_i|i\in \mathcal{I}\} | \{A_j|j\in \mathcal{J}\}$ where $\mathcal{I}, \mathcal{J}$ are nonempty subsets of $\{1,2,\cdots,N\}$,  disjoint and $\mathcal{I}\cup\mathcal{J}=\{1,2,\cdots,N\}$.   Let $\mathcal{A}$ and $\mathcal{B}$ denote the computational bases of the systems $\{A_i|i\in \mathcal{I}\} $ and $ \{A_j|j\in \mathcal{J}\}$ respectively. Suppose that $|\Psi_{k,i}\rangle=\sum_{|a\rangle\in \mathcal{A}} \sum_{|b\rangle\in \mathcal{B}} \psi_{a,b}|a\rangle|b\rangle.$ It sufficient to prove that the rank of the matrix $(\psi_{a,b})$ is greater than one.  Clearly, $k$ can be expressed as two different  forms $k=s+t$ such that $0\leq s\leq |\mathcal{I}|$ and  $0\leq t\leq |\mathcal{J}|.$ Suppose $k=s_1+t_1=s_2+t_2$ such that  $0\leq s_1<s_2\leq |\mathcal{I}|$  and  $0\leq 
			t_2<t_1\leq |\mathcal{J}|.$  Choose any subsets $\mathcal{I}_x\subset \mathcal{I}$ ($\mathcal{J}_y\subset \mathcal{J}$) such that $|\mathcal{I}_x|=s_x$ for $x=1,2$ ($|\mathcal{J}_y|=t_y$  for $y=1,2$). We define
			$$
			\begin{array}{rl}
				|\mathcal{I},\mathcal{I}_x\rangle:=&\ \left((\otimes_{i\in \mathcal{I}_x}  |1\rangle_{A_i})\otimes  (\otimes_{ i\in \hat{\mathcal{I}}_x }  |0\rangle_{A_i}) \right) \in \mathcal{A}, \\
				|\mathcal{J},\mathcal{J}_y\rangle:=&\left((\otimes_{j\in \mathcal{J}_y}  |1\rangle_{A_j})\otimes  (\otimes_{ j\in \hat{\mathcal{J}}_y }  |0\rangle_{A_j})\right) \in \mathcal{B},  \\
			\end{array}
			$$
			where $x,y\in \{1,2\}.$
			The matrix $(\psi_{a,b})$ has the   $2\times 2$ minor
			$$
			\begin{array}{ccc}
				& |\mathcal{J},\mathcal{J}_2\rangle & |\mathcal{J},\mathcal{J}_1\rangle\\
				|\mathcal{I},\mathcal{I}_1\rangle & 0 & \alpha\\
				|\mathcal{I},\mathcal{I}_2\rangle & \beta &0
			\end{array}
			$$
			where $\alpha\beta \neq 0.$ Therefore, the Schmidt rank of $|\Psi_{k,i}\rangle$ across this partition  $\{A_i|i\in \mathcal{I}\} | \{A_j|j\in \mathcal{J}\}$ is greater than 1. Hence it is entangled.
			
			In Theorem \ref{thm:SN_MP}, if we replace the set $\mathcal{S}_0$ by two states $|\Psi_{\pm}\rangle:=|{\boldsymbol{0}}\rangle\pm |{\boldsymbol{1}}\rangle $ where $|\boldsymbol{0}\rangle=\otimes_{i=1}^N|0\rangle_{A_i},|\boldsymbol{1}\rangle=\otimes_{i=1}^N|1\rangle_{A_i}$ and  denote the new  total set as $\mathcal{S}_G$, then the set $\mathcal{S}_G$  is genuinely entangled set that also has property of strong nonlocality. In fact, for each $1\leq k\leq N-1, i\in \mathbb{Z}_{c_k}$, from  the orthogonal relations 
			$$\langle \Psi_\pm |E |\Psi_{k,i}\rangle=0,$$
			we can deduce the orthogonal relations 
			$\langle {\boldsymbol{0}} | E |\Psi_{k,i}\rangle=0.$
			Therefore, the orthogonal relations of  $\mathcal{S}_G$ contains those from $\mathcal{S}$ in Theorem \ref{thm:SN_MP}. Using these relations, we could obtain that  $\mathcal{S}_G$ is also of strongest nonlocality.
			\vskip 8pt

\section{More examples with strongest  nonlocality}\label{sec:strongnonlocality}In this part, we try to use the algebraic quantities  in section \ref{sec:characterize} to find   more sets which  has the property  of strongest nonlocality. The first result is out of our expectation. Using entangled states, three qubits is enough to show the strongest  nonlocality.
We use the following Pauli gate operations 
$X,Y, Z$   
{\small$$
	X=\left[
	\begin{array}{cc}
		0& 1\\
		1& 0
	\end{array}
	\right], \ \
	Y=\left[
	\begin{array}{cc}
		0& i\\
		-i& 0
	\end{array}
	\right],\ \ Z=\left[
	\begin{array}{cc}
		1&0\\
		0& -1
	\end{array}
	\right], 
	$$}
and the identity operation $ I$.
In $\mathbb{C}^2\otimes \mathbb{C}^2\otimes \mathbb{C}^2$, the $W$ state is $|100\rangle+|010\rangle+|001\rangle$. Let $U^{(W)}$ denote the matrix $$X\otimes I\otimes I+Z\otimes X\otimes I+Z\otimes Z\otimes X.$$
Then the set $\cS_W:=\{U^{(W)}|ijk\rangle \ \mid\  i,j,k\in \mathbb{Z}_2 \}$ (the states can be seen  in the following table) is an orthogonal  basis whose elements are  all locally unitary equivalent to the above $W$ state \cite{Miyake05}. 

\begin{table}[h]
	\centering	
	$\begin{array}{ccc}\hline\hline
		|ijk\rangle &   U^{(W)}|ijk\rangle & \text{ label }\\ \hline
		|000\rangle &|100\rangle+|010\rangle+|001\rangle & |\psi_1\rangle\\
		|001\rangle &|101\rangle+|011\rangle+|000\rangle & |\psi_2\rangle\\
		|010\rangle &|110\rangle+|000\rangle-|011\rangle & |\psi_3\rangle\\
		|011\rangle &|111\rangle+|001\rangle-|010\rangle & |\psi_4\rangle\\
		|100\rangle &|000\rangle-|110\rangle-|101\rangle & |\psi_5\rangle\\
		|101\rangle &|001\rangle-|111\rangle-|100\rangle & |\psi_6\rangle\\
		|110\rangle &|010\rangle-|100\rangle+|111\rangle & |\psi_7\rangle\\
		|111\rangle &|011\rangle-|101\rangle+|110\rangle & |\psi_8\rangle\\ \hline\hline
	\end{array}
	$
\end{table}

Using Matlab we can show that the  ranks of the matrices $D_{R|\hat{R}}(\cS_W)$  ($R=A,B,C$)  are  all 15.  Therefore, by Theorem \ref{thm:AlgorithmBasis_Strong}, the set $\cS_W$ is of strongest   nonlocality. Moreover, the statement holds also for  any its subset with 6 elements. 

Generally, we have  an orthonormal basis whose elements are all  locally unitary equivalent to the generalized W state in $N$-qubit \cite{Miyake05}.    For any integer $N\geq 3$,  we denote    $$U_N^{(W)}:=\sum_{l=1}^N Z_1\cdots Z_{l-1}X_lI_{l+1}\cdots I_N.$$ The set defined by  $\cS^{(W)}_N:=\{U_N^{(W)}|\boldsymbol{i}\rangle \ \mid \  \boldsymbol{i}\in \mathbb{Z}_2^N\}$ is such a set.
\begin{lemma}\label{lem:W_basis}
	The set $\cS^{(W)}_N$ is   an orthogonal set of pure states whose elements are all locally unitary equivalent to the multiqubit $W$ state $|W_N\rangle:=\sum_{l=1}^N|0_1\cdots0_{l-1}1_l0_{l+1}\cdots 0_N\rangle$ (see Figure \ref{fig:Wbasiscurcuit}). 
\end{lemma}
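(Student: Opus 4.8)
The plan is to reduce both assertions to one explicit identity. Writing $\bm i=(i_1,\dots,i_n)\in\bbZ_2^n$ and applying $Z^{a}\ket{b}=(-1)^{ab}\ket{b}$, $X\ket{b}=\ket{\bar b}$ term by term to $U_n^{(W)}=\sum_{l=1}^{n}Z_1\cdots Z_{l-1}X_lI_{l+1}\cdots I_n$, one gets
\[
U_n^{(W)}\ket{\bm i}=\sum_{l=1}^{n}(-1)^{i_1+\cdots+i_{l-1}}\,\ket{i_1\cdots \bar i_l\cdots i_n},
\]
where the exponent is the empty sum (hence $0$) for $l=1$. A glance at the $n=3$ table above confirms the signs. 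Everything below is read off from this formula.

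For orthogonality I would use the algebraic structure of $U_n^{(W)}$. Put $P_l:=Z_1\cdots Z_{l-1}X_lI_{l+1}\cdots I_n$, so $U_n^{(W)}=\sum_{l}P_l$ with each $P_l$ a Hermitian Pauli string and $P_l^2=\bbI$. For $l\ne m$ the two strings $P_l$ and $P_m$ have commuting tensor factors on every site except site $\min\{l,m\}$, where one carries $X$ and the other $Z$; that single anticommuting pair gives $P_lP_m=-P_mP_l$. Hence $(U_n^{(W)})^{\dagger}U_n^{(W)}=(U_n^{(W)})^{2}=\sum_l P_l^2+\sum_{l\ne m}P_lP_m=n\,\bbI$, so $\langle\bm j|(U_n^{(W)})^{\dagger}U_n^{(W)}|\bm i\rangle=n\,\delta_{\bm i,\bm j}$ and the $2^n$ vectors of $\cS^{(W)}_n$ are pairwise orthogonal, each of norm $\sqrt n$. (Alternatively this drops out of the explicit formula: for $\bm i\ne\bm j$ the overlap only receives contributions from pairs $(l,m)$ for which $\bm i,\bm j$ agree off $\{l,m\}$, which forces $\bm i,\bm j$ to differ in precisely the two positions $\{l,m\}$, and then the two orderings $(l,m)$ and $(m,l)$ enter with opposite signs and cancel.)

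For the local-unitary equivalence I would write down the relevant product unitary explicitly,
\[
V_{\bm i}:=\bigotimes_{j=1}^{n}\Big(\mathrm{diag}\big(1,(-1)^{i_1+\cdots+i_{j-1}}\big)\,X^{i_j}\Big),
\]
a tensor product of single-qubit unitaries. The point is that in the $l$-th summand of the formula above the only nonzero coordinate sits at site $l$: the factors $X^{i_j}$ turn that coordinate into $1$ when $j=l$ and into $0$ when $j\ne l$, sending the summand to $\ket{0\cdots0\,1_l\,0\cdots0}$, and then among the diagonal phase factors the ones on sites $j\ne l$ act trivially on $\ket{0}$ while the one on site $l$ multiplies $\ket{1}$ by $(-1)^{i_1+\cdots+i_{l-1}}$, exactly cancelling the sign of that summand. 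Summing over $l$ gives $V_{\bm i}\,U_n^{(W)}\ket{\bm i}=\sum_{l=1}^{n}\ket{0\cdots0\,1_l\,0\cdots0}=\ket{W_n}$, so $U_n^{(W)}\ket{\bm i}=V_{\bm i}^{\dagger}\ket{W_n}$ is locally unitarily equivalent to $\ket{W_n}$, recovering the family of \cite{Miyake05}.

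The only genuine bookkeeping hurdle is the sign tracking in the last step: one must check that the phase gate on site $j$ can be chosen to depend only on $\bm i$ and not on the running index $l$, which works precisely because the support of the $l$-th term is concentrated at site $l$. The rest — the explicit formula, $P_l^2=\bbI$, and the anticommutation $P_lP_m=-P_mP_l$ — is routine Pauli arithmetic. I would also note that both properties pass immediately to every subset of $\cS^{(W)}_n$, which is the form in which the statement is used afterwards.
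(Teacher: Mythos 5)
Your proof is correct and follows essentially the same route as the paper: orthogonality via the Pauli anticommutation $XZ=-ZX$ giving $(U_n^{(W)})^\dagger U_n^{(W)}=n\,\mathbb{I}$, and local-unitary equivalence via exactly the product unitary $\otimes_l\big(Z_l(\theta_l)X_l^{i_l}\big)$ with $\theta_l=i_1+\cdots+i_{l-1}$ that the paper uses. The only cosmetic difference is that you first write out the explicit expansion of $U_n^{(W)}\ket{\bm i}$ and package the two single-qubit layers into one operator $V_{\bm i}$, whereas the paper factors through $\ket{\psi_{\bm i}}=U_n^{(W)}X_1^{i_1}\cdots X_n^{i_n}\ket{\bm 0}$; the content is identical.
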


\begin{proof}
	The prove that $\cS^{(W)}_N$ is an orthogonal set, it is sufficient to prove  that $U_N^{(W)}$ is a unitary matrix up to a constant. This can be easily deduced when we notice that the anti-commutative relation $XZ=-ZX$. Therefore,
\begin{widetext}
    	{ $$	\begin{array}{rl}
			&U_N^{(W)}{U_N^{(W)}}^\dagger  \\[2mm] =&\displaystyle\sum_{k=1}^N\sum_{l=1}^N ( Z_1\cdots Z_{k-1}X_kI_{k+1}\cdots I_N)( Z_1\cdots Z_{l-1}X_lI_{l+1}\cdots I_N)\\[4mm]
			=&  \displaystyle\sum_{k<l} I_1\cdots I_{k-1}(X_kZ_k+Z_kX_k)Z_{k+1}\cdots Z_{l-1} X_l I_{l+1}\cdots I_N +\displaystyle\sum_{k=1}^N   I_1\cdots I_{k-1}I_kI_{k+1}\cdots I_N\\[4mm]
			=&N I^{\otimes N}.
		\end{array}
		$$
	}
	
	Clearly,  $U_N^{(W)}|\boldsymbol{0}\rangle =|W_N\rangle$ which is exactly the $N$-qubit $W$ state. For any $\boldsymbol{i}=(i_1,i_2,\cdots, i_N) \in \mathbb{Z}_2^N$, the state $|\psi_{\boldsymbol{i}}\rangle:=U_N^{(W)}|\boldsymbol{i}\rangle$
	can be written as 
	$$|\psi_{\boldsymbol{i}}\rangle= U_N^{(W)} X_1^{i_1}X_2^{i_2}\cdots X_N^{i_N} |\boldsymbol{0}\rangle. $$
	One can check that
	{ 	$$  X_1^{i_1} X_2^{i_2} \cdots X_N^{i_N} |\psi_{\boldsymbol{i}}\rangle=\sum_{l=1}^N (-1)^{i_1+\cdots+ i_{l-1}}|0_1\cdots0_{l-1}1_l0_{l+1}\cdots 0_N\rangle.$$
	}
\end{widetext}
	For each $l\in \{1,\cdots, N\}$, define $\theta_l:=i_1+\cdots+ i_{l-1}$ and 
	$Z(\theta):=|0\rangle\langle 0|+(-1)^\theta|1\rangle\langle 1|.$ Then 
	$$ 
	\otimes_{l=1}^N(Z_l(\theta_l)X_l^{i_l}) |\psi_{\boldsymbol{i}}\rangle=|W_N\rangle.
	$$
	That is, $|\psi_{\boldsymbol{i}}\rangle$ is locally unitary equivalent to $|W_N\rangle$.  
\end{proof}

\begin{figure}[ht]
	\centering
	\includegraphics[scale=0.55]{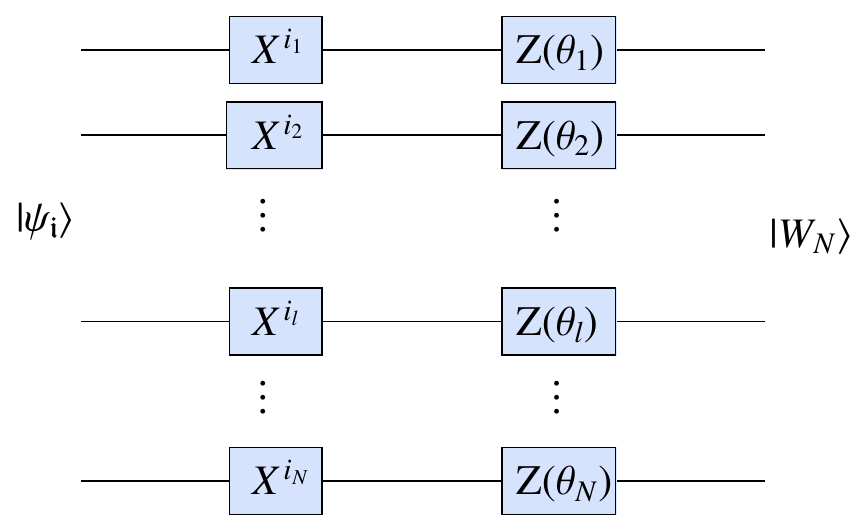}
	\caption{This is the circuit which transfer the state $|\psi_{\boldsymbol{i}}\rangle$ to $|W_N\rangle.$ }\label{fig:Wbasiscurcuit}
\end{figure}

Using the Matlab, we can calculate the rank of theirs corresponding quantities  to  check whether these sets are  locally stable or not.

\begin{example}\label{pro:W_SN_triqubit_four_to_seven}
	The set $\cS_N^{(W)}:=\{U_N^{(W)}|\boldsymbol{i}\rangle \ \mid \  \boldsymbol{i}\in \mathbb{Z}_2^N\}$ is  of  strongest  nonlocality  for $3\leq N\leq 8$. 
\end{example}	

In fact, by randomly choosing a subset of $\cS_N^{(W)}$ with $2^N-2^{N-2}$ elements, we find that it is  locally stable for each bipartition for $3\leq N\leq 8$.
We conjecture that the above statement should indeed hold for all $N$-qubit systems provided $N\geq 3$. Moreover, we have obtained some numerical results on the smallest set that can show the local stableness. Based on  our numerical results (for systems with small dimension), we conjecture that the bound in Theorem \ref{thm:AlgorithmBasis_Strong} is even compact! 

\begin{conjecture}\label{con:SN_size}
	Let $\mathcal{H}=\otimes_{i=1}^N\mathcal{H}_{A_i}$ be a $N$-parties quantum systems whose local dimension $\text{dim}_\mathbb{C}(\mathcal{H}_{A_i})=d_{i}\geq 2$. Then the following two statements hold
	\begin{enumerate}[(a)]
		\item 
		There exists some orthogonal set $\cS$ of pure states in $\mathcal{H}$  such that it is  locally  stable and  $|\cS|=\max_{i}\{d_i+1\}.$  That is,
		$$s(d_1,d_2,\cdots,d_N)= \max_{i}\{d_i+1\}.$$
		
		\item  There exists some orthogonal set $\cS$ of pure states in $\mathcal{H}$  such that it is of strongest   nonlocality and  	$|\cS| = \max_i\{\hat {d}_i+1\}.$ That is,
		$$S(d_1,d_2,\cdots,d_N)=  \max_i\{\hat {d}_i+1\}$$
		where $\hat{d}_i=(\prod_{j=1}^N d_j)/d_i$.
	\end{enumerate} 
\end{conjecture}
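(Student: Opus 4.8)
The lower bounds in parts (a) and (b) are already Theorem~\ref{thm:AlgorithmBasis_Bounds}, so all that remains is the matching \emph{upper} bounds: explicit orthogonal sets of the claimed minimal cardinalities. By Theorem~\ref{thm:AlgorithmBasis_Strong} it suffices, for (a), to build $\mathcal{S}$ with $|\mathcal{S}|=\max_i\{d_i+1\}$ and $\mathrm{dim}_{\mathbb{C}}\mathcal{D}_{\hat{A}_i|A_i}(\mathcal{S})=d_i^2-1$ for every $i$, and, for (b), to build $\mathcal{S}$ with $|\mathcal{S}|=\max_i\{\hat{d}_i+1\}$ and $\mathrm{dim}_{\mathbb{C}}\mathcal{D}_{A_i|\hat{A}_i}(\mathcal{S})=\hat{d}_i^2-1$ for every $i$. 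I would first recast these conditions in matrix language: across a bipartition $X|Y$, write each $|\psi_k\rangle$ as a $d_X\times d_Y$ matrix $\Psi_k$, so that $\langle\psi_k|\psi_l\rangle=\mathrm{Tr}(\Psi_k^{\dagger}\Psi_l)$ and, through the $\mathrm{vec}$ isometry used in the proof of Theorem~\ref{thm:AlgorithmBasis}, $\mathcal{D}_{X|Y}(\mathcal{S})$ has full dimension $d_Y^2-1$ exactly when $\{\Psi_k^{\dagger}\Psi_l : k\neq l\}$ spans all traceless $d_Y\times d_Y$ matrices, while $\mathcal{D}_{Y|X}(\mathcal{S})$ is full exactly when $\{\Psi_k\Psi_l^{\dagger} : k\neq l\}$ spans all traceless $d_X\times d_X$ matrices. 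Orthogonality of the states makes these products automatically traceless, so the whole problem reduces to a spanning statement, and the number $|\mathcal{S}|(|\mathcal{S}|-1)$ of generators only barely exceeds the dimension $d_X^2-1$ (resp.\ $d_Y^2-1$) to be spanned at the conjectured cardinalities, which is the first indication that the bounds are tight.

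For part (a), normalize $d_1=\max_i d_i=:d$ and take $\mathcal{S}=\{|0\cdots0\rangle\}\cup\{|\Psi_1\rangle,\dots,|\Psi_d\rangle\}$, where each $|\Psi_l\rangle$ is a superposition of \emph{weight-one} computational strings (exactly one nonzero coordinate, in any one of the subsystems). Across any cut $A_i|\hat{A}_i$ the matrix $\Psi_l$ is then supported only on its first row and its first column, so a product $\Psi_k\Psi_l^{\dagger}$ of two such states equals a corner scalar plus a rank-one block built from the $A_i$-amplitude vectors of $|\Psi_k\rangle$ and $|\Psi_l\rangle$ (call them $c^{(k)}_i,c^{(l)}_i\in\mathbb{C}^{d_i-1}$), while products involving $|0\cdots0\rangle$ produce the pure row and column matrices $E_{0j},E_{j0}$ — the same mechanism already visible in the three-Bell-state Example~\ref{ex:Bell}. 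The plan is then a short linear-algebra lemma: if, for every subsystem $i$, the family $\{c^{(l)}_i\}_{l=1}^{d}$ consists of a basis of $\mathbb{C}^{d_i-1}$ together with one extra vector all of whose coordinates in that basis are nonzero, then the off-diagonal matrix units $E_{ab}$ come directly from products of two weight-one states, the extra vector supplies the combinations $E_{aa}-E_{00}$, and a count gives $d_i^2-1$ independent traceless matrices, so $\mathcal{D}_{\hat{A}_i|A_i}(\mathcal{S})$ is full. The one genuinely delicate point is satisfying all the orthogonality relations $\sum_i\langle c^{(k)}_i,c^{(l)}_i\rangle=0$ simultaneously while keeping this basis-plus-generic shape for every subsystem; the smaller subsystems carry slack, and I expect a uniform choice built from roots of unity (in the all-qubit case $\mathcal{S}=\{|0\cdots0\rangle,|W_n\rangle,\sum_{j}\omega_n^{j-1}|0\cdots1_j\cdots0\rangle\}$ already works) to settle it for all $(d_1,\dots,d_n)$. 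This is the step I would carry out in full detail.

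Part (b) is the hard case. A weight-one state factorizes across the bipartitions on which it is a product, so the building blocks must now be genuinely entangled and the construction above does not transfer directly. The natural attempt is to keep the weight-stratified structure of the set $\mathcal{S}_G$ from Theorem~\ref{thm:SN_MP}, but to retain only a thinned, basis-plus-generic sub-collection from each weight stratum so that the total cardinality drops to $\max_i\{\hat{d}_i+1\}$, using Lemmas~\ref{lem:zero} and~\ref{lem:trivial} to reduce the orthogonality-preserving-measurement condition stratum by stratum and then verifying the residual spanning condition on every bipartition. I expect the main obstacle to be exactly this last verification at the \emph{minimal} count, uniformly in all local dimensions: strongly structured families (generalized Bell states, Pauli-type sets) collapse their product sets to $O(d)$ operators instead of the $\Theta(d^2)$ that are needed, the stubborn missing directions being the block-diagonal trace-balance ones that a generic extra vector repairs in part (a); so the amplitudes must be tuned generically enough to defeat this collapse while still respecting orthogonality, and for (b) this must hold for every bipartition at once. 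If a closed-form family resists analysis, two fallbacks are available. First, the two full-rank conditions are Zariski-open in the amplitudes, so a generic orthonormal family works and it is enough to exhibit a single explicit witness in each dimension. Second, one can try an induction that enlarges a single subsystem at a time: block-embed the previous $\Psi_k$ (which preserves the earlier full-rank span) and adjoin one new state together with one new level to supply the missing last row, last column, and trace-balance direction — the subtlety being that orthogonality forces all but one of the new corner entries to vanish, so the new level must enter through the adjoined state rather than through the old blocks, with the three Bell states and the $W$-type sets of Example~\ref{pro:W_SN_triqubit_four_to_seven} as base cases.
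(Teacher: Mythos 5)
First, a point of calibration: the statement you are proving is presented in the paper as a \emph{conjecture}, not a theorem. The paper supplies only the lower bounds (Theorem~\ref{thm:AlgorithmBasis_Bounds}), a much larger upper-bound construction of size $\prod_n d_n-\prod_n(d_n-1)$ (Theorem~\ref{thm:SN_MP}), and numerical evidence for small systems; it offers no proof that the lower bounds are attained. So there is no ``paper's proof'' to match, and your proposal must be judged on whether it closes the gap on its own. It does not. Your reformulation is sound and correctly identifies the right target: writing $|\psi_k\rangle$ across a cut as a matrix $M_k$ (columns $|\psi_{k,i}\rangle_B$), orthogonality gives $\mathrm{Tr}(M_kM_l^{\dagger})=0$, and local stability on that side is exactly the requirement that $\{M_kM_l^{\dagger}:k\neq l\}$ span the full $(d_B^2-1)$-dimensional traceless subspace, consistent with the $\mathrm{vec}$ argument in the proof of Theorem~\ref{thm:AlgorithmBasis}. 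The counting observation that $|\cS|(|\cS|-1)$ barely exceeds $d^2-1$ at the conjectured cardinality is suggestive but is not evidence of attainability --- it only shows the lower bound cannot be improved by dimension counting alone.

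The concrete gaps are these. For (a), the entire difficulty is the simultaneous system of constraints you defer: the vectors $c^{(l)}_i$ must satisfy $\sum_i\langle c^{(k)}_i,c^{(l)}_i\rangle+(\text{corner terms})=0$ for all $\binom{d+1}{2}$ pairs \emph{and} form a ``basis plus totally non-degenerate extra vector'' in every subsystem simultaneously; you assert that ``a uniform choice built from roots of unity'' settles this for all $(d_1,\dots,d_n)$ but exhibit it only for qubits, where $\max_i\{d_i+1\}=3$ and the claim is easy. Moreover, spanning the $(d_i-1)\times(d_i-1)$ corner block from the rank-one products $c^{(k)}_i(c^{(l)}_i)^{\dagger}$ with $k\neq l$ (the diagonal $k=l$ products are excluded) and recovering the trace-balance directions $E_{aa}-E_{00}$ is precisely the kind of verification that can fail for structured choices, as you yourself note; it must be done, not sketched. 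For (b) the proposal is entirely programmatic: thinning $\mathcal{S}_G$ from $\sim\prod d_n-\prod(d_n-1)$ states down to $\max_i\{\hat d_i+1\}$ destroys the hypotheses of Lemmas~\ref{lem:zero} and~\ref{lem:trivial} (which need, e.g., a state with nonzero overlap with every member of a stratum spanning the \emph{whole} stratum), and no replacement argument is given. The Zariski-genericity fallback is also incomplete as stated: the spanning conditions are open, but you must still produce, for every dimension vector, one orthogonal family of the minimal cardinality lying in that open set --- which is the conjecture itself. As it stands this is a reasonable research plan for attacking Conjecture~\ref{con:SN_size}, not a proof of it.
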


\vskip 12pt	

\section{Conclusion and Discussion}\label{sec:Conclusion}  In this paper, we studied a special class of sets with quantum nonlocality, i.e., the locally stable sets. That is, an orthogonal set of pure states in multipartite quantum system whose possible orthogonality preserving local  measurements are just trivial measurements. Locally stable sets are always locally indistinguishable sets. And we found that the two concepts are coincide  only  in two qubits systems.  We obtained an algebraic characterization of locally stable set. As a consequence, we obtained a lower bound of the cardinality on the locally stable set (and strongest   nonlocal set).  \add{ Moreover,  we  showed that locally stable sets of product states cannot be    perfect discrimination  under asymptotic LOCC  wherein an error is allowed but must vanish in the limit of an infinite number of rounds.}
			
			Moreover, we presented two constructions of sets that are of strongest   nonlocality.  Their proofs can be directly verified via   two basic  lemmas developed in Ref.  \cite{Shi21}.   One of the set contains genuinely entangled states except one fully product state. The other set  contains only genuinely entangled states. Our result give a complete answer to an open question raised in   Ref. \cite{Shi22N}.  This result gives an upper bound on the the smallest cardinality of those orthogonal sets in multipartite systems that are of strongest  nonlocality.   
			
			There are also some questions left to be considered. We conjectured that there is some set  of cardinality $\max_{i}\{d_i+1\} $ of  orthogonal states in $\otimes_{i=1}^N\mathcal{H}_{A_i}$ (where $d_i=\mathrm{dim}_{\mathbb{C}}(\mathcal{H}_{A_i})$) that is locally stable.  We also conjecture that there is some set  of cardinality $\max_{i}\{\hat{d}_i+1\} $ of  orthogonal states in $\otimes_{i=1}^N\mathcal{H}_{A_i}$ (where $d_i=\text{dim}_{\mathbb{C}}(\mathcal{H}_{A_i})$ and $\hat{d}_i=(\prod_{j=1}^N d_j)/d_i$) that is of strongest  nonlocality. In addition, it is also to consider the smallest sets of product states that are locally stable. We hope that the study of locally stable sets will enrich our understanding of the quantum nonlocality.

\vskip 8pt

\noindent\emph{Note added.}\, \,  Very rencently, the authors in Ref. \cite{Cao23} provided partial solutions to the part (a) of conjecture \ref{con:SN_size}.

\vspace{2.5ex}

\noindent{\bf Acknowledgments}\, \,   This  work  is supported  by  National  Natural  Science  Foundation  of  China  (12371458, 11901084, 12005092),  the Guangdong Basic
and Applied Basic Research Foundation under Grants   No. 2023A1515012074 and the   Science and Technology Planning Project of Guangzhou  under Grants   No. 2023A04J1296.

\vskip 10pt

\end{document}